\newif\iflatexml\latexmlfalse
\newcommand{\N}[0]{\mathcal{N}}
\newcommand{\E}[0]{\mathbb{E}}
\newcommand{\R}[0]{\mathbb{R}}
\newcommand{\Kn}[0]{\mathbf{K}_n} 
\newcommand{\An}[0]{\mathbf{A}_n} 
\newcommand{\Rc}[0]{\mathbf{R}} 
\newcommand{\Sc}[0]{\mathbf{Q}} 
\newcommand{\Pf}[0]{\mathbf{P}^f} 
\newcommand{\Pa}[0]{\mathbf{P}^a} 
\newcommand{\tPf}[0]{\tilde{\mathbf{P}}^f} 
\newcommand{\tPa}[0]{\tilde{\mathbf{P}}^a} 
\newcommand{\jm}[1]{{#1}}
\newcommand{\Lc}[0]{\mathbf{L}}
\renewcommand{\H}[0]{\mathbf{H}}
\newcommand{\Hdag}[0]{\mathbf{H}^\dagger}
\newcommand{\I}[0]{\mathbf{I}}
\renewcommand{\P}[0]{{\Pi}}
\newcommand{\U}[0]{\mathbf{U}}
\newcommand{\T}[0]{\mathbf{T}}
\newcommand{\B}[0]{\mathbf{B}}
\newcommand{\Hq}[0]{\H^q}
\newcommand{\Rq}[0]{\Rc^q}
\newcommand{\Hqp}[0]{\H^{q\perp}}
\newcommand{\Rqp}[0]{\Rc^{q\perp}}
\newtheorem{thm}{Theorem}[section]
\newtheorem{remark}{Remark}
\newtheorem{alg}{Algorithm}
\newcommand\BibTeX{{\rmfamily B\kern-.05em \textsc{i\kern-.025em b}\kern-.08em
T\kern-.1667em\lower.7ex\hbox{E}\kern-.125emX}}
\begin{document}

\runningheads{J.M. and E.V.V.}{\emph{Q.~J.~R.
Meteorol. Soc.}}

\title{Particle filters for data assimilation based on reduced order 
data models\footnotemark[2]}

\author{John Maclean\affil{a}\corrauth,\
Erik S. Van Vleck\affil{b}}

\address{\affilnum{a}School of Mathematical Sciences, University of Adelaide, South Australia.
\url{http://www.adelaide.edu.au/directory/john.maclean}\\
\affilnum{b}Department of Mathematics, University of Kansas, USA.
\url{http://people.ku.edu/~erikvv/}}

\corraddr{\url{John.Maclean@adelaide.edu.au}}



\begin{abstract}
We introduce a framework for Data Assimilation (DA) in which the data is split into multiple sets corresponding to low-rank projections of the state space. Algorithms are developed that assimilate some or all of the projected data, including an algorithm compatible with any generic DA method.
The major application explored here is PROJ-PF, a projected Particle Filter.  The PROJ-PF implementation assimilates highly informative but low-dimensional observations. The implementation considered here is based upon using projections corresponding to Assimilation in the Unstable Subspace (AUS).
In the context of particle filtering, the projected approach mitigates the collapse of particle ensembles in high dimensional DA problems while preserving as much relevant information as possible, as the unstable and neutral modes correspond to the most uncertain model predictions. 
In particular we formulate and numerically implement a projected Optimal Proposal Particle Filter (PROJ-OP-PF) and compare to the standard optimal proposal and to the Ensemble Transform Kalman Filter.

\end{abstract}

\keywords{Data Assimilation, Numerical Analysis, Dimension Reduction}







\maketitle

\footnotetext[2]{JM, ONR, grant: N00014-18-1-2204; ARC, grant DP180100050; EVV, NSF, DMS-1714195 and DMS-1722578.}




\section{Introduction} \label{sec:Intro}

Many data assimilation techniques were developed based on extending assumptions of linearity in the phase space and data models and under the assumption of Gaussian errors. Several techniques have proven to be successful in weakening these assumptions, while other techniques have been developed to explicitly overcome these obstacles. Important among these are particle filters \citep{DoucetEtAl2000}, a key subject of this paper. Particle filters have proven to be successful for low dimensional assimilation problems but tend to have difficulty with higher dimensional problems. Different variants of particle filters have been develop to combat these difficulties, including implicit particle filters, proposal density methods, the optimal proposal, etc \citep{Chorin10,SnyderEtAl08,Leeuwen10,Snyder2011}. {Recent work has often focused on the issue of localization (\cite{Farchi18}, e.g.), and two localised particle filtering algorithms \citep{Poterjoy16, Potthast19} have been applied in an operational geophysical framework. The localised particle filter of \cite{Potthast19} contains an element related to the approach taken in this paper. In particular, in \cite{Potthast19} observations are projected onto the subspace spanned by the ensemble of model forecasts. This is shown to effect a significant reduction in the dimension of the data, one which mitigates the issues that high model dimension induces in particle filters.}

Our contribution in this paper is to develop a framework for data assimilation schemes in which the data are constrained by an arbitrary projection to lie in some subspace of observation or model space. We explicitly obtain a form for the reduction in data dimension, and an expression that determines how much the posterior of the Bayesian DA scheme is affected by use of the projection. While the projection is not specified, the key idea is that some physically based reduction technique can then be employed in concert with a DA scheme. In such a way the assimilation step is performed in a space of very low dimension. A cognate approach in \cite{Potthast19} projects the data onto the subspace spanned by the forecast ensemble, originating in the Local Ensemble Transform Kalman Filter \citep{Hunt2007}.

The derivation in this paper was motivated in large part by assimilation in the unstable subspace (AUS) techniques. These techniques have largely focused on projecting the phase space model using Lyapunov vectors while employing the original data or observational model. The techniques and framework developed in this paper allow for combinations of (time dependent) projected and unprojected physical and data models, and their formulation is independent of the source of the projections. The framework and techniques lead to several natural applications. In particular we develop, implement, and compare two new particle filter algorithms based upon a dimension reduction technique into the unstable subspace.

\noindent
{We now discuss the historical antecedents of the projections in this manuscript, and connect them to other recent filtering approaches.} The AUS techniques \citep{CaGhTrUb08, TrDiTa10, PaCaTr13, LaSaShSt14, SaSt15} to improve speed and reliability of data assimilation specifically address the partitioning of the tangent space into stable, neutral and unstable subspaces corresponding to Lyapunov vectors associated with negative, zero and positive Lyapunov exponents.
In particular, Trevisan, d'Isidoro \& Talagrand propose a modification of 4DVar, so-called 4DVar-AUS, in which corrections are applied only in the unstable and neutral subspaces \citep{TrDiTa10,PaCaTr13}. 
These techniques are based on updating in the unstable portion of the tangent space and may be interpreted in terms of projecting covariance matrices during the assimilation step.

Motivated by these techniques for assimilation in the unstable subspace, in \cite{ProjShadDA} a new method is developed for data assimilation that utilizes distinct treatments of the dynamics in the stable and non-stable directions. 
In particular, the first phase of this development has involved employing time dependent Lyapunov vectors
to form a subsystem with tangent space dynamics similar to the unstable subspace of the original state space model. 
This was motivated by AUS techniques. The key piece of \cite{ProjShadDA} related to this work is the following projected model update. For a smooth discrete time model $u_{n+1} = F_n(u_n)$ and projection  $\P_n$, 
and for $\{u_n^{(0)}\}_{n=0}^N$ any reference solution, 
solve for $\{d_n\}_{n=0}^N$:
\begin{equation}
\label{PIPsys}
u_{n+1}^{(0)} + d_{n+1} = \P_{n+1}F_n(u_n^{(0)}+d_n),\,\,\,
n=0,...,N-1.
\end{equation}
This manuscript develops a complementary approach to project the data model. The two approaches will be compared in Section~\ref{sec:comp}.

Another branch of projected DA schemes use the `Dynamically Orthogonal' (DO) formulation \citep{SL09,Sapsis}, in which the forecast model 
is broken into a partial differential equation governing the mean field and a number of stochastic differential equations describing the evolution of components in a time-dependent stochastic subspace of the original differential equation. The DO approach was used to assimilate with different DA schemes in the subspace and mean field space in \cite{SL13,MajdaQiSapsis14,QiMajda15}. These techniques use both a projected and mean field model to make a forecast, similar to using \eqref{PIPsys}. 
The data is naturally split into the `projected' and remaining components without using a projected data model (see, e.g., \eqref{pdata2}) explicitly, as the data may be confined to the DO-subspace by simply subtracting the forecast mean field. This attractive feature of the DO methodology bypasses the need to derive a projected data model, as the covariance structure of the data does not change. 

Projection-based DA schemes have been developed to assimilate coherent structures \citep{MSJ17} or features \citep{Morzfeld18} in the data. These approaches have used likelihood-free sequential Monte Carlo methods, or an ad hoc `perturbed observations' approach, to deal with the difficulty of calculating the likelihood function for a coherent structure. The derivation in this paper may lead to an explicit likelihood for data-derived coherent structures/features obtained via a projection.
Additional sources for projections may be found in the review of projection based model reduction techniques \citep{BGW15}.

We develop a projected DA framework and algorithms for arbitrary time dependent orthogonal  projections, but are mainly interested in the AUS approach where the projections identify the unstable/neutral subspace. To determine the projections we will employ standard techniques for approximation of Lyapunov exponents, e.g., the so-called discrete QR algorithm (see \cite{DVV07,DiVV15}).

Unlike most past work related to AUS our primary focus is on developing a systematic approach to confining the data, not the model, to the unstable subspace. In some
of the initial works on AUS \citep{CTU07,CaTrDeTaUb08}, either target observations at the location
where the unstable mode attains its maximum value, or only the observations falling in the vicinity of the maximum, were assimilated. Albeit empirical, that choice already signified using only data projected on an approximation of the unstable subspace, that was obtained by Breeding on the
Data Assimilation Cycle (BDAS). Furthermore, \cite{ToHu13,Bocquet17,Grud2018b,FZ2018} are all at least in part devoted to discussing the necessary and/or sufficient criteria for filter stability in terms of the projection of the observations into the unstable/neutral/weakly stable directions and this is directly related to the choice of adaptive observation operators in \cite{LaSaShSt14}.


If the non-stable subspace is relatively low dimensional this makes applications of techniques such as particle filters appealing. 
Particle Filters \citep{doucet} are particularly effective for nonlinear problems and for the tracking of non-Gaussian, multi-modal probability distributions; but they suffer from the so-called curse of dimensionality \citep[e.g.]{SnyderEtAl08, Snyder2011,MTAC12,van2012particle}. There is a known formulation that minimises degeneracy; however, even with a linear model, it is known that the computational cost of this "Optimal Proposal" Particle Filter scales like the exponential of the observation dimension \citep{Snyder2011}. Our aim in this work is to avoid the established limits of particle filter performance by reducing the observation dimension in a sensible way. Other efforts to bypass this limitation include, e.g., the Equivalent Weights Particle Filter \citep{Leeuwen10}. One attractive feature of our approach is that it is a reformulation of the standard DA problem rather than a specific algorithm, and so it is compatible with these advanced particle filters.

The availability of the projection into the unstable subspace will also allow us to develop a novel approach to resampling. It is necessary to periodically refresh any particle ensemble, and some noise is usually added at this step. To avoid forcing the ensemble off the attractor with this added noise we confine most of the noise to the unstable subspace, which improves the filter accuracy and reduces the incidence of resampling in later steps. 


This paper is organized as follows. Data assimilation is reviewed in section~\ref{sec:da} and projected DA is 
formulated in section~\ref{sec:pfaus}. Algorithms for using the new projected data are introduced (section~\ref{sec:PDA}) and applied to AUS with several numerical experiments (section~\ref{sec:AUS}). A discussion (section~\ref{sec:disc}) and bibliography conclude the paper.

\section{Data Assimilation} \label{sec:da}
Data assimilation methods combine orbits from a dynamical system model with measurement data to obtain an improved estimate for the state of a physical system.   
In this paper we develop a data assimilation method in the context of the discrete time stochastic model
\begin{align} \label{model}
u_{n+1} = F_n(u_n) + \sigma_n,\,\,\, n=0,1, ...
\end{align}
where $u_n\in\R^N$ are the state variables at time $n$
and $\sigma_n \sim \N(0,\Sc)$, i.e., drawn from a normal distribution with mean zero and model error covariance $\Sc$.
Let the sequence
$\{u_0^t, u_1^t, \dots \}$,
be a distinguished orbit of this system,
 referred to as the \emph{true solution} of the model, and presumed to be unknown.
As each time $t_n$ is reached we collect an observation $y_n$  related to $u_n^t$ via 
\begin{equation}\label{data}
	y_n = \H u_n^t + \eta_n, \qquad y_n \in \R^M
\end{equation}
where $\H:\R^N \rightarrow \R^M$, $M\le N$, is the observation operator, and the noise variables $\eta_n$ are drawn from a normal distribution $\eta_n \sim \N(0,\Rc)$ with zero mean and known observational error covariance matrix $\Rc$.  In general the observation operator can be nonlinear.\\
We formulate DA under the ubiquitous Bayesian approach. Consider the assimilation of a single observation, $y_n$, at time step $n$. Given a prior estimate $p(u_n)$ of the state, Bayes' Law gives
\begin{align*}
p(u_n|y_n)\,  &{ \propto }\, p(y_n|u_n)p(u_n),
\end{align*}
Using \eqref{data} the likelihood function is, up to a normalization constant,
\begin{align}
\label{likelihood}
p(y_n|u_n) \propto \exp\left[-\frac{1}{2} \left(y_n - \H u_n\right)^T \Rc^{-1} \left(y_n - \H u_n\right) \right] \;.
\end{align}
This procedure, which we have written for the assimilation of data at a single observation time, readily extends to the sequential assimilation of observations at multiple times under the assumptions that the state is Markovian and the observations at different times are conditionally independent (see for example \cite{HandbookDA}).\\

In the following we introduce some key DA schemes. Not much detail is given here, but the interested reader is referred in particular to three recent books on DA, \citep{ReichCotter15,Law2015,Asch16}.
\subsection{Kalman Filtering} \label{sec:KF}
The Kalman Filter and later extensions are ubiquitous in DA, and are now briefly described. For a linear model, i.e. where \eqref{model} is
\begin{align} \label{linmodel}
u_{n+1} = \An u_n + \sigma_n,
\end{align}
and for the linear observation operator $\H$, the Kalman Filter calculates the exact posterior ${u_n|y_n \sim \N(u^a_n,\Pa_n)}$, where the \emph{analysis} variables are
\begin{align}
\label{ua} u^a_n =& u^f_n + \Kn(y_n - \H u^f_n)\;,
\\
 \label{P}   \Pa_n =& \left(\I-\Kn\H\right)\Pf_n.
\end{align}
The weight matrix $\Kn$ is the Kalman gain matrix
\begin{align}
    \label{K}
    \Kn = \Pf_n \H^T \left(\H \Pf_n \H^T + \Rc\right)^{-1} \;.
\end{align}
The superscript $f$ is reserved for \emph{forecast} variables, obtained at time $n$ by using \eqref{linmodel} to update $\{u^a_{n-1},\Pa_{n-1}\}$, 
$$u^f_n = \mathbf{A}_{n-1} u^a_{n-1} + \sigma_{n-1}\;,$$
$$\Pf_n = \mathbf{A}_{n-1} \Pa_{n-1} \mathbf{A}_{n-1}^T + \Sc\;.$$
Two extensions of the Kalman Filter are prevalent in nonlinear DA, the Extended Kalman Filter (EKF) and Ensemble Kalman Filter (EnKF). Neither give the exact posterior for a nonlinear model.

\subsubsection{Extended Kalman Filter} \label{sec:EKF}
The nonlinear model \eqref{model} is used to make the forecast $u^f_n$, and then the Kalman Filter update is applied using the linearisation 
$$\An = \left. \frac{\partial F_n}{\partial u}\right|_{u^a_{n}} \;.$$
If the observation operator is a nonlinear function $h()$, the linearization
$$\H_n = \left. \frac{d h}{d u}\right|_{u^f_n}$$
is used everywhere except to compute the \emph{innovation} $y_n - h(u^f_n)$ in the calculation of $y^a_n$.\\
The EKF is suitable for low dimensional nonlinear filtering, but the required linearizations are nontrivial for high-dimensional filtering. The EnKF by contrast is well suited to high dimensions.

\subsubsection{Ensemble Kalman Filter} \label{sec:EnKF}

The Ensemble Kalman Filter is a Monte Carlo approximation of the Kalman Filter that is well suited to high dimensional filtering problems, introduced in \cite{Evensen94,Burgers98}. An ensemble of forecasts $u_n^{f,i}$ are made at time $t_n$, $i$ from $1$ to $L$. Then the forecast covariance $\Pf_n$ is approximated by the sample covariance of the ensemble, and the analysis ensemble $u_n^{a,i}$ is obtained in such a way that its mean $\bar{u}_n^a = \frac{1}{L}\sum_i u_n^{a,i}$ satisfies \eqref{ua} and its sample covariance satisfies \eqref{P}. In this paper we will use analysis updates corresponding to the Ensemble Transform Kalman Filter (ETKF) \citep{Bishop2001}.
For more details and a modern introduction to the Ensemble Kalman Filter, see e.g. \cite{evensen2009data}.

\subsection{The Particle Filter} \label{sec:PF}
Particle Filters (PF) are a collection of particle based data assimilation schemes that do not rely on linearization of the dynamics or Gaussian representations of the posterior; see \cite{doucet} for a comprehensive review. The basic idea is to represent the prior distribution $p(u_n)$, previously the forecast, and the posterior distribution $p(u_n|y_n)$, previously the analysis, by discrete
probability measures. Suppose that at time $n-1$ we have the posterior distribution $(u_{n-1}^i,w_{n-1}^i)$, supported on points $u_{n-1}^1, \ldots u_{n-1}^L$ and with
 weights $w_{n-1}^1, \ldots w_{n-1}^L$. Each $w_{n-1}^i \ge 0$ and $\sum_{i=1}^L w_{n-1}^i=1$.
Here L is the number of particles that are used to approximate the distribution $\Pi^{}_{n-1}$. The two key steps in the Particle Filter are as follows:\\
\noindent {\em Prediction step.}  Propagate each of the particles $u_{n-1}^i \mapsto  u_{n}^i$. One simple choice, the bootstrap PF, is to use the state dynamics \eqref{model} to forecast each particle. \\
This gives the forecast probability distribution as a discrete probability measure concentrated on
$L$ points $\{ u_n^i\}_{i=1}^L$ with weights $\{ w_{n-1}^i\}_{i=1}^L$.

\noindent {\em Filtering step.} 
Update the weights $\{ w_{n-1}^i\}_{i=1}^L$ using the observation $y_n$. In the bootstrap PF the update is
$$w_n^i = c\, w_{n-1}^i p(y_n| u^i_n),$$ where $c$ is chosen so that $\sum_{i=1}^L w_{n}^i=1$.

This scheme is  easy to implement but suffers from severe degeneracy, especially in  high dimensions. That is, after a few time steps all the weight tends to concentrate on a few particles. A common remedy is to monitor the Effective Sample Size (ESS) and resample when the ESS drops below some threshold in order to refresh the particle cloud; see e.g. \cite{doucet,HandbookDA}.

\subsubsection{The Optimal Proposal} \label{sec:op}
The optimal proposal particle filter (OP-PF) \citep{SnyderEtAl08,DoucetEtAl2000,Snyder2011,van2012particle} 
attempts to address the degeneracy issue in particle filters with the aim of ensuring that all posterior particles have similar weights. The `proposal' is the distribution used to update the particles from one time step to the next.
In the prediction step in the basic particle filter above, the particles are updated using the model, so the proposal density in that approach is (compare \eqref{model}) ${p(u^i_n | u^i_{n-1}) \sim \N(F_{n-1}(u^i_{n-1}),\Sc)}$. 

The optimal proposal density is ${p(u^i_{n}|u^i_{n-1},y_{n})}$. Given the additive noise of the model \eqref{model}, the optimal proposal update in each particle is Gaussian with ${p(u^i_{n}|u^i_{n-1},y_{n})\sim\N(m^i_{n},\Sc_{p})}$, where
\begin{align}
\label{opS} \Sc^{-1}_{p} =& \Sc^{-1} +  \H^T  \Rc^{-1}  \H  \;, \\
\label{opM} m_{n}^i =& F_{n-1}(u_{n-1}^i) + \Sc_{p}\H^T \Rc^{-1}\left(y_{n} - \H F_{n-1}(u_{n-1}^i)\right)\;.
\end{align}
The prefactor $\Sc_{p}\H^T \Rc^{-1}$ can be written as the Kalman gain \eqref{K} (albeit with $\Pf_n=\Sc$) by an application of the Sherman-Morrison-Woodbury formula (see e.g. \cite{Kalnay03}, p. 171).

Two applications of Bayes' law (e.g. in \cite{Snyder2011}) show that the weight update for the $i$-th particle drawn from this proposal satisfies ${w_{n}^i \propto\; p(y_{n}|u_{n-1}^i)w_{n-1}^i}$ and is also Gaussian,
\begin{align}
\label{opW} w_n^i \propto& \exp\left[-\frac{1}{2}(I_n^i)^T\left(\H\Sc\H^T + \Rc \right)^{-1}(I_n^i)\right] w_{n-1}^i\, .
\end{align}
where $I_n^i := y_{n} - \H F_{n-1}(u_{n-1}^i)$.

As mentioned in the previous section, degeneracy - characterised by a single particle with weight of approximately $1$ - is a common problem in the PF. In \cite{Snyder15} it is shown that, of all PF schemes that obtain $u_n^i$ using $u_{n-1}^i$ and $y_n$, the `optimal proposal' above has the minimum variance in the weights. That is, it suffers the least from weight degeneracy. In \cite{VanLeeuwen18} this result is extended to any PF scheme that obtains $u_n^i$ using $i$, $u_{n-1}^{1:L}$ and $y_n$. \\
The distributions required to apply the Optimal Proposal are not always available (the additive model error of \eqref{model} and linear observation operator of \eqref{data} are used above to obtain closed forms for the individual particle updates and weight updates), but when OP-PF can be formulated it is the least degenerate of a large class of filters. However, in \cite{Snyder2011} it is shown that the optimal proposal requires an ensemble size L satisfying $\log L \propto N\!\times\! M$ for a linear model, or will suffer from filter degeneracy.
That is, filter degeneracy is intimately connected to model and observation dimension, and is a fundamental obstacle to Particle Filtering in high dimensional problems.

\section{Projected Data Models} \label{sec:pfaus}

We now develop an approach to decompose the observations using projections defined in state space. A wealth of techniques from dynamical systems theory can then be used to obtain low-dimensional data models. \\
Suppose that at time $n$ a dynamically significant rank $p$ orthogonal projection $\P_n \in\R^{N\times N}$ is available, as well as data $y_n \in \R^M$. 


The main result will be to define a projected observation $y^q_n \in \R^p$, and derive a corresponding data model 
\begin{equation}\label{projdata} 
y^q_n = \U_n^T \P_\H u^t_n +  \gamma_n
\end{equation}
that is a linear transformation of \eqref{data}, where 
$\P_n$ and $\P_\H$ are orthogonal projections with $\P_n = \U_n \U_n^T$ ($\U_n^T \U_n = I$)
and $\P_\H = \H^T (\H\H^T)^{-1}\H$, 
and $\gamma_n$ has known distribution. The projected data contains only the components of observations that can be written as a linear combination of the columns of $\U_n$. It can be used in place of the original data in any DA algorithm, or used in concert with the original data in the novel Particle Filtering algorithm developed in Section~\ref{sec:PDA}.

We will derive \eqref{projdata} in the following three steps.

\subsection*{Step One: lift the data into model space}
In order to apply the projection $\P_n$ to data, we first need to find an equivalent representation of the data in model space. \\
Assuming $\H$ has full row rank, we define an $N$-dimensional vector $\tilde y_n = \Hdag y_n$ where $\Hdag = \H^T(\H\H^T)^{-1}$. The data model for $\tilde y_n$ is
\begin{align*}\tilde y_n =& \Hdag y_n\\
=&  \P_\H u^t_n + \Hdag\eta_n \\
=& \P_\H u^t_n + \psi_n 
\end{align*}
where $\P_\H = \Hdag \H$ is an orthogonal projection, and $\psi_n \sim \N(0,\Hdag \Rc(\Hdag)^T)$. \\

\noindent Using that $\H \Hdag = \I$ one readily confirms that $\H \tilde y_n = y_n = \H u^t_n + \eta_n$. That is, the observation operator collapses $\tilde y_n$ onto the standard data model. The transformation through $\Hdag$ has not affected the output of a DA scheme, as $p(\tilde y_n|x) = p(y_n|x)$; however $\tilde y_n$ is of compatible dimension with $\P_n$.

%
%

\subsection*{Step Two: project the data into a rank $p$ subspace} 
We now make use of the orthogonal projection $\P_n$. The idea is to formulate a new data model, along the lines of $\P_n \tilde y_n = \P_n \P_\H u^t_n + \P_n \Hdag \eta_n$, that contains only the components of the observation that align with the projection. The projected data models that are developed here may be considered as generalizations of the construction of observation operators (see \cite{Grud2018b} Def. 13 and \cite{LaSaShSt14}).

\noindent

Define $y_n^p = \P_n  \tilde y_n = \P_n  \Hdag y_n \in \R^N$, the projected observation. The data model is  
\begin{align}
\nonumber y^p_n =& \P_n  \Hdag y_n\\
\label{pdata1} =&  \P_n  \P_\H u^t_n + \xi_n 
\end{align}
where $\xi_n\sim\N(0,\P_n  \Hdag \Rc(\Hdag)^T \P_n)$. The data model $y^p_n$ has a singular normal distribution with support in the $p$-dimensional subspace of model space spanned by the projection $\P_n$, and the likelihood of this distribution can be written using the pseudo-inverse (see e.g. \cite{TK15}) as 
\begin{align}
\label{tmp.p} p( y^p_n|u) \propto& \exp\left(-\frac{1}{2} (I_n^p)^T \left(\P_n  \Hdag \Rc(\Hdag)^T \P_n \right)^{\dagger} I_n^p\right)
\end{align}
where $I_n^p := y^p_n-\P_n \P_\H u$.

\begin{remark}
The product $\P_n\P_\H$ is not generally an orthogonal projection, and in some circumstances it might be desired to instead identify the projection $\P_n^\H$ that is the intersection of $\P_n$ and $\P_\H$. This projection $\P_n^\H$ may be approximated by Von Neumann's algorithm or Dykstra's projection algorithm; see Appendix~\ref{dpa} for a review. The projection $\P_n^\H$ should only be used if the transversality condition $p+M-N>0$ is satisfied; otherwise there is no guarantee of any intersection between $\P_n$ and $\P_\H$.\\

\end{remark}

\subsection*{Step Three: reduce the projected data to a $p$-vector}

To make explicit the reduction in the data dimension that has been obtained by $y^p_n$ we introduce a low dimensional data model. 
Denote by $\U_n$ the matrix with orthonormal columns satisfying $\P_n = \U_n\U_n^T$. This matrix may be already known (in the examples in Section~\ref{sec:pfaus} $\U_n$ is obtained first, and then $\P_n$ is calculated from $\U_n \U_n^T$), or $\U_n$ may be found via the singular value or Schur decompositions. For the case $\P_n = \P_n^\H$ we redefine $p$ as the rank of $\P_n$. \\
Define $y^q_n = \U_n^T y^p_n \equiv \U_n^T \tilde y_n \in \R^p$, with the associated data model
\begin{align} \label{pdata2}
y^q_n = \Hq_n u^t_n +  \gamma_n\;,
\end{align}
where $\Hq_n = \U_n^T \P_\H$, $\gamma_n \sim \N(0,\Rq_n)$, and $\Rq_n =     \U_n^T \Hdag \Rc (\Hdag)^T \U_n$. 

The transformations between and dimensions of the different data variables defined in this section are illustrated in Figure~\ref{fig:guide}.


\tikzstyle{decision} = [diamond, draw, fill=blue!20, 
    text width=4.5em, text badly centered, node distance=3cm, inner sep=0pt]
\tikzstyle{block} = [rectangle, draw, fill=blue!20, 
    text width=3em, text centered, rounded corners, minimum height=3em]
\tikzstyle{line} = [draw, -latex']
\tikzstyle{cloud} = [draw, ellipse,fill=red!20, node distance=4em,
    minimum height=2em]
    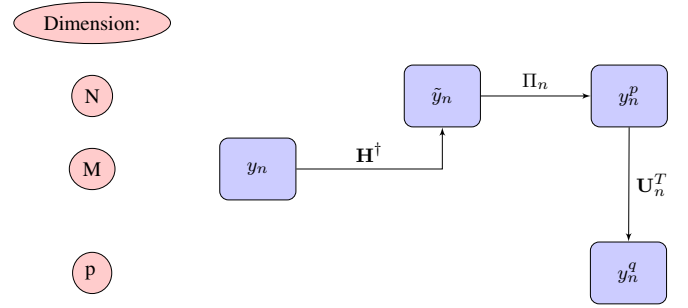
\begin{figure}
    \resizebox{0.49\textwidth}{!}{
\begin{tikzpicture}[node distance = 2cm, auto]
    \node [cloud] (dim) {Dimension:};
    \node [cloud, below=1.5em of dim] (model) {N};
    \node [cloud, below=1.5em of model] (obs) {M};
    \node [cloud, below=3em of obs] (proj) {p\,};
    \node [block, right=5em of obs] (yn) {$y_n$};
    \node [block, right=14em of model] (tyn) {$\tilde y_n$};
    \node [block, right=23em of model] (pyn) {$y^p_n$};
    \node [block, right=23em of proj] (qyn) {$y^q_n$};
    \path [line] (yn) -| node [near start] {$\Hdag$} (tyn);
    \path [line] (tyn) -- node {$\P_n$} (pyn);
    \path [line] (pyn) -- node {$\U_n^T$} (qyn);
\end{tikzpicture}}
\caption{The progression from the original data $y_n$ to low-dimensional, projected data $y^q_n$. The rectangular boxes contain data, or data-derived constructs. The height of the box shows the dimension of the data at each step. Note that in practice one does not need to compute $\tilde y_n$ or $y^p_n$.}
\label{fig:guide}
\end{figure}

\subsection{Properties of the projected data}
\noindent
\begin{thm}[Equivalence of $y^p_n$ and $y^q_n$]\label{thm:main}
For the data models associated with $y^p_n$ and $y^q_n$ given by
(\ref{pdata1}) and (\ref{pdata2}), respectively, $p(y^q_n|u) = p(y^p_n|u)$. \\
\end{thm}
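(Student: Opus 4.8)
The plan is to observe that both $p(y^p_n\mid u)$ and $p(y^q_n\mid u)$ are Gaussian densities — the former the \emph{singular} Gaussian defined by \eqref{pdata1}, supported on $\mathrm{Range}(\P_n)$ and with likelihood \eqref{tmp.p}; the latter an ordinary Gaussian on $\R^p$ — and that $y^q_n$ arises from $y^p_n$ by applying the linear map $\U_n^T$. Since $\U_n^T$ restricts to an isometric isomorphism of $\mathrm{Range}(\P_n)$ onto $\R^p$ with inverse $v\mapsto\U_n v$, pushing one density forward along it should reproduce the other; concretely it suffices to check that (i) the quadratic forms in the two exponents coincide and (ii) the normalising constants coincide. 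The only facts I would use are $\U_n^T\U_n=\I$, $\P_n=\U_n\U_n^T$, and the definitions $\Hq_n=\U_n^T\P_\H$, $\Rq_n=\U_n^T\Hdag\Rc(\Hdag)^T\U_n$.

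For (i) I would first record the two elementary identities
\[
I_n^q \;=\; y^q_n-\Hq_n u \;=\; \U_n^T\bigl(y^p_n-\P_n\P_\H u\bigr) \;=\; \U_n^T I_n^p,
\qquad
\P_n\Hdag\Rc(\Hdag)^T\P_n \;=\; \U_n\Rq_n\U_n^T,
\]
the first using $\U_n^T\P_n=\U_n^T$ together with $y^q_n=\U_n^T y^p_n$, the second just expanding $\P_n=\U_n\U_n^T$. The substantive step is the pseudo-inverse identity $(\U_n\Rq_n\U_n^T)^{\dagger}=\U_n\Rq_n^{\dagger}\U_n^T$, which I would verify directly from the four Moore--Penrose conditions using only $\U_n^T\U_n=\I$ and the symmetry of $\Rq_n\Rq_n^{\dagger}$ (when $\Rq_n$ is nonsingular this collapses to $\U_n\Rq_n^{-1}\U_n^T$). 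Feeding these into \eqref{tmp.p} converts its exponent into $(\U_n^T I_n^p)^T\Rq_n^{\dagger}(\U_n^T I_n^p)=(I_n^q)^T\Rq_n^{\dagger}I_n^q$, which is exactly the exponent attached to the data model \eqref{pdata2}.

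For (ii) I would argue either probabilistically — the isometry $\U_n^T:\mathrm{Range}(\P_n)\to\R^p$ carries the Lebesgue measure on the support of $y^p_n$ to Lebesgue measure on $\R^p$ with unit Jacobian, so the density values are preserved — or algebraically, noting that $\U_n\Rq_n\U_n^T$ and $\Rq_n$ have the same rank and the same nonzero eigenvalues, whence the pseudo-determinants and support dimensions entering the Gaussian normalising constants agree. Either way, combined with (i) this yields $p(y^q_n\mid u)=p(y^p_n\mid u)$.

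The only place where genuine care is required is the singular-Gaussian bookkeeping: establishing the pseudo-inverse identity without assuming $\Rq_n$ invertible, and verifying that the normalising constants — not merely the exponents — coincide. Everything else is routine linear algebra, and if one is content with the proportionality statement used elsewhere in the paper, step (ii) may be dropped and the claim follows from the exponent computation alone.
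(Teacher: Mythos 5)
Your proposal is correct and follows essentially the same route as the paper's proof: both hinge on writing $\P_n\Hdag\Rc(\Hdag)^T\P_n=\U_n\Rq_n\U_n^T$, invoking the pseudo-inverse identity $(\U_n\Rq_n\U_n^T)^\dagger=\U_n(\Rq_n)^\dagger\U_n^T$ (the paper derives it from the product rules $(\U_n\B)^\dagger=\B^\dagger\U_n^T$ and $(\B\U_n^T)^\dagger=\U_n\B^\dagger$ for a matrix with orthonormal columns, you from the Moore--Penrose axioms), and collapsing the exponent via $\U_n^T\U_n=\I$ and $I_n^q=\U_n^T I_n^p$. Your step (ii) on the normalising constants is in fact more careful than the paper, which works only up to proportionality and therefore strictly establishes equality of the likelihoods only as $u$-dependent factors; your isometry/pseudo-determinant argument is what upgrades this to the literal equality claimed in the theorem statement.
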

\begin{proof}
The matrix $\U_n$ has orthonormal columns, so $\U_n^\dag = \U_n^T$ and for any matrix $\B$
\begin{align*}
\left(\U_n \B\right)^\dag =& \B^\dag \U_n^\dag = \B^\dag \U_n^T \;,\\
\left( \B\U_n^T\right)^\dagger =& (\U_n^T)^\dag \B^\dagger  = \U_n \B^\dag  \;.
\end{align*}
Applying these results to \eqref{tmp.p}, and using that $\P_n = \U_n\U_n^T$, $y^p_n = \U_n y^q_n$, $\U_n^T\U_n = \I$, $I_n^p:= y^p_n-\P_n \P_\H u$, and $I_n^q:= y^q_n-\U_n^T \P_\H  u$,
\begin{align*}
p(y^p_n|u) \propto& \exp\left(-\frac{1}{2}(I_n^p)^T \left(\P_n  \Hdag \Rc(\Hdag)^T \P_n \right)^{\dagger} I_n^p\right)  
\\
=& \exp\left(-\frac{1}{2}\left(\U_n I_n^q\right)^T 
(\tilde \Rc_n)^\dagger
\left(\U_n I_n^q\right)\right) 
\\
=& \exp\left(-\frac{1}{2}(I_n^q)^T \U_n^T \U_n \left(\Rq_n\right)^{\dagger}\U_n^T\U_n I_n^q\right) 
\\
=& \exp\left(-\frac{1}{2}(I_n^q)^T \left(\Rq_n\right)^{\dagger} I_n^q\right) 
\\
=& p(y^q_n|u) 
\end{align*}
where $\tilde \Rc_n := \U_n\U_n^T  \Hdag \Rc(\Hdag)^T \U_n\U_n^T.$
\end{proof}
If in addition $p\le M$ (or $0<p+M-N\leq M$ for $\P_n\equiv\P_n^\H$), and if $\H\U_n$ is full rank, 
then the covariance matrix $\Rq_n$ of $y^q_n$ is invertible and $y^q$ has a standard normal distribution.
More generally for $(\H\H^T)^{-1}\Rc (\H\H^T)^{-1}= \Lc^T \Lc$, the Cholesky factorization, consider the SVD of $\Lc\H\U_n = \mathbf{S}\mathbf{\Sigma} \mathbf{V}^T$. The rank of
the covariance matrix $\Rq_n = \U_n^T \Hdag \Rc (\Hdag)^T \U_n = \mathbf{V} \mathbf{\Sigma}^T \mathbf{\Sigma} \mathbf{V}^T$ is equal to the number of non-zero singular values of $\mathbf{\Sigma}$. 

Theorem~\ref{thm:main} provides a blueprint for any DA scheme to be efficiently implemented with projected observations, involving the following changes: the observation $y_n$ is replaced with $y^q_n$, the observation operator $\H$ is replaced with $\Hq_n$, and the assumed measurement covariance $\Rc$ is replaced with $\Rq_n$.

\subsection{The orthogonal data model}
Though the focus of this paper is on the projected data, a data model for the complementary orthogonal projection $\I-\P_n$ is easy to write down. Define 
\begin{align}\label{IPdata}
y^{q\perp}_n = \left(\U_n^\perp\right)^T \tilde y_n\in\R^{N-p}\;,
\end{align}
where $\U_n^\perp (\U_n^\perp)^T = \I-\P_n$. The two projected data models are not independent in general and have joint distribution
\begin{align}\label{datacov}
\left[\begin{gathered}y^q_n\\ y^{q\perp}_n\end{gathered}\right] \sim \N\left(
\left[\begin{gathered}\Hq_n\, u^t_n\\ \Hqp_n u^t_n\end{gathered}\right],\quad
\left[\begin{gathered}
\Rq_n      \qquad
\Rc^{q}_{12,n}\\
\Rc^{q}_{21,n} \qquad
\Rqp_n
\end{gathered}\right]\right)\;,
\end{align}
where $\Hqp = (\U_n^\perp)^T\P_\H$, $\Rqp = (\U_n^\perp)^T \Hdag \Rc (\Hdag)^T\U_n^\perp$, and the off-diagonal covariances are $\Rc^{q}_{12,n} = \U_n^T \Hdag \Rc (\Hdag)^T\U_n^\perp$ and $\Rc^{q}_{21,n} = \left(\Rc^{q}_{12,n}\right)^T$.\\
The joint distribution \eqref{datacov} is not used in this manuscript, but is the core of ongoing work to apply different filters to the projected and orthogonal data.

%
%

\section{Algorithms for Projected DA} \label{sec:PDA}
In this section we discuss how some combination of the standard/projected forecast models \eqref{model}, \eqref{PIPsys} and data models \eqref{data}, \eqref{pdata2}, \eqref{IPdata}--\eqref{datacov} may be used to form a `projected DA scheme'.  \\
A projected data model changes the innovation, the observation operator, and the observation error covariance. A projected physical model changes the prior and model error covariances.
We want combinations of physical models, data models, and DA techniques that optimize the assimilation, particularly of the Particle Filtering schemes discussed in Section~\ref{sec:PF}.


We identify the following approaches to assimilating with projected data using the results of this paper:
\begin{alg}[Project data only, and discard the orthogonal component] \label{alg:pd}
Apply a standard DA scheme using the unprojected forecast model \eqref{model}, but replace the standard data \eqref{data} with the projected data $y^q_n$ of \eqref{pdata2}. The observation operator is replaced by $\Hq_n$, and the covariance matrix of the observations is replaced by $\Rq_n$.
\end{alg}
A Particle Filter employing Algorithm~\ref{alg:pd} that we denote by PROJ-PF will be tested on a stiff dissipative linear system in Section~\ref{sec:lin}. {PROJ-PF uses the standard forecast model \eqref{model} to update the particles, but computes the weight update with
\begin{align}\label{projW}
w_n^i \propto& \exp\left[-\frac{1}{2} \left(y_n^q - \Hq_n u_n\right)^T \left(\Rq_n\right)^{-1} \left(y_n^q - \Hq_n u_n\right) \right] w_{n-1}^i \;.
\end{align}}

Another algorithm to be described is a novel, efficient PF scheme taking advantage of the Optimal Proposal PF described in section~\ref{sec:op}. 
\begin{alg}[PROJ-OP-PF: Blend projected and unprojected data in the assimilation step] \label{alg:oppfaus}
This algorithm describes a Particle Filter. PROJ-OP-PF uses the typical optimal proposal equations \eqref{opS}--\eqref{opM} for the particle update. The weight update for each particle is computed using the projected data model only, i.e. using {the projected form of \eqref{opW},
\begin{align}
\label{projopW} w_n^i \propto& \exp\left[-\frac{1}{2}(I_n^q)^T\left(\Hq_n\Sc(\Hq_n)^T + \Rq_n \right)^{-1}(I_n^q)\right] w_{n-1}^i \, .
\end{align}}
where $I_n^q \equiv I_n^q(u_{n-1}^i) := y_{n}^q - \Hq_n F_{n-1}(u_{n-1}^i)$.
\end{alg}
Algorithm~\ref{alg:oppfaus} uses all available data to update the particles, but only updates the weights based on how well the particles represent the projected data. This strategy will be tested on the chaotic Lorenz-96 system in Section~\ref{sec:l96}. One major advantage of this approach is that it requires no modification of the numerical simulation used to obtain the forecast. A second advantage is its efficiency; the full data are used for the particle update step, over which the update is straightforward and the dimension of the data does not lead to filter degeneracy; and only the projected data are used to avoid filter degeneracy in the weight update step. The scheme will prove to be more accurate than either, OP-PF or an Algorithm~\ref{alg:pd} implementation of OP-PF, in numerical tests.

We make the following modification to resampling in PROJ-OP-PF:
\begin{alg}[PROJ-RESAMP: Resampling in the Unstable Subspace] \label{alg:r}
When adding noise to particles after resampling, generate (the usual) noise sampled from $\N(\bf{0},\omega^2\I), $ where $\omega\in\R$ must be selected or tuned, and then multiply this random vector by $\alpha\P_n+(1-\alpha)\I$, for some $\alpha\in[0,1].$ 
\end{alg}
When $\alpha=0$ this algorithm is no different to the normal resampling approach, but for $\alpha>0$ some proportion of the uncertainty in resampling is constrained to lie in the space spanned by the columns of $\U_n$. For AUS the resampling scheme should add more noise in the directions of greatest uncertainty in the forecast model, which provides one advantage; a second advantage is that the algorithm does not shift particles as far off the attractor.

\subsection{Convergence results for projected algorithms} \label{sec:conv}
A normal line of inquiry for a new DA algorithm is to quantify the conditions under which it will well represent the posterior distribution, which neglecting time subscripts we write as $p(u|y)$. The projected algorithms above do not generally converge to $p(u|y)$, and so there are two questions: `Does the algorithm converge to a known distribution?', and 'How different is that distribution to the usual posterior?'.

Algorithm~\ref{alg:pd} clearly implements an approximation of the distribution $p(u|y^q)$. 
That is, a Particle Filter implementation would converge to $p(u|y^q)$ in the limit as the number of particles approaches infinity. 
The distribution approximated by Algorithm~\ref{alg:oppfaus} is a blending of $p(u|y)$ and $p(u|y^q)$ that is non-trivial to obtain in closed form.

We now quantify how the Algorithm~\ref{alg:pd} distribution $p(u|y^q)$ relates to the standard posterior $p(u|y)$. For this we will employ the Hellinger distance: given two probability measures $\mu$ and $\mu'$, with associated probability distributions $\rho$ and $\rho'$, the Hellinger distance between the two is
\begin{align} \label{Hell}
d_H (\mu,\mu') = \left[ \frac{1}{2} \int\! \left( \sqrt{{\rho(u)}} - \sqrt{{\rho'(u)}}\right)^2 \, du \right]^{1/2} \;.
\end{align}
The Hellinger distance constrains the difference between functions in the two probability spaces, $|\E^\mu f(u) - \E^{\mu'}f(u)| \le C\, d_H(\mu,\mu')$, true for any $f$ that is square integrable over $\mu$ and $\mu'$ \citep{Law2015}.

To bound this distance for Algorithm~\ref{alg:pd} we write $\rho(u) = p(u|y)$ and $\rho'(u) = p(u|y^q)$. The second distribution is written as 
$$p(u|y^q) = p(u|y) \frac{ p(y|y^q)}{p(y|u,y^q)} \;,$$
obtained via Bayes' law in the form $p(u) = p(u|y)\, p(y) / p(y|u)$, conditioning on $y^q$, and using $p(u|y,y^q) = p(u|y)$. Using $p(y|y^q) = p(y^{q\perp}|y^q)$, we obtain the final form
$$p(u|y^q) = p(u|y) \frac{ p(y^{q\perp}|y^q)}{p(y^{q\perp}|u,y^q)} \;.$$

Substituting into \eqref{Hell} we obtain a bound for the consistency of Algorithm~\ref{alg:pd} with the original posterior $p(u|y)$,
\begin{align}
\nonumber d_H(\mu,\mu') &= \left[ \frac{1}{2} \int\! \left( 1 - \sqrt{\frac{ p(y^{q\perp}|y^q)}{p(y^{q\perp}|u,y^q)}}\right)^2 \rho(u) \, du \right]^{1/2} \\
\label{refHell} &= \left[ \frac{1}{2} \E^\mu \left( 1 - \sqrt{\frac{ p(y^{q\perp}|y^q)}{p(y^{q\perp}|u,y^q)}}\right)^2 \right]^{1/2}\;.
\end{align}
Intuition on the projected algorithms suggests that if the projection somehow represents `important' quantities in the model, e.g. directions associated with positive Lyapunov exponents, or coherent structures, etc., then the projected data will retain the same key information from the original data, and the posterior approximated by the projected DA algorithm will be similar to the original posterior. The above result quantifies that intuition. The posterior distribution $p(u|y^q)$ associated with the projected algorithm will be close to $p(u|y)$ provided that knowing the projected data $y^q$ is about as useful as knowing the truth $u$ in determining the values of the orthogonal, discarded data; in that case $p(y^{q\perp}|y^q) \approx p(y^{q\perp}|u,y^q)$ and $d_H(\mu,\mu')\approx 0$. 


An intuitive example of the above bounds in practice is a slow-fast system with a slow manifold onto which the fast variables are attracted. Choosing $\P_n$ to identify the slow variables will lead to a small value of $d_H(\mu,\mu')$ for either Algorithm~\ref{alg:pd},
since knowledge of the slow variables is sufficient to constrain the fast variables. In the case where there are few slow variables and many fast variables, then, an Algorithm~\ref{alg:pd} Particle Filter will be a much less degenerate implementation of the Particle Filter that converges close to the desired posterior $p(u|y)$. A linear system of this type will be the first numerical example, in Section~\ref{sec:lin}.

\section{Application: Assimilation in the Unstable Subspace} \label{sec:AUS}


For the remainder of the paper we will study the case where the projection identifies the most unstable modes in the forecast model. To determine these modes we employ the discrete QR algorithm \citep{DVV07,DiVV15}.
For the discrete time model $u_{n+1} = F_n(u_n) + \sigma_n$ with $u_n\in\R^N$, let $\U_0\in\R^{N\times p}$ $(p\leq N)$ denote a random matrix such that $\U_0^T \U_0 = \I$,
\begin{align}\label{QR}
\U_{n+1} \T_n =& F_n'(u_n)\U_n \approx \frac{1}{\epsilon}[F_n(u_n + \epsilon \U_n) - F_n(u_n)],\,\,\, n=0,1,...
\end{align}
where $\U_{n+1}^T \U_{n+1} = \I$ and $\T_n$ is upper triangular with positive diagonal elements.
With a finite difference approximation the cost is that of an ensemble of size $p$ plus a reduced $QR$ via modified Gram-Schmidt to re-orthogonalize. Time dependent orthogonal projections to decompose state space are $\P_n = \U_n \U_n^T$ and $ \I-\P_n = \I - \U_n\U_n^T$. 
\jm{In order to apply \eqref{QR} to an ensemble DA method, $\U_0$ must be specified and we must choose how to obtain $u_n$ from the ensemble of particles at time $t_n$. We initialise $\U_0$ from a modified Gram-Schmidt orthonormalization of a random $N\times p$ matrix, and choose $u_n := \sum_i w_n^i u_n^i$, the weighted particle mean.}

\subsection{A comparison of the projected approach to classical AUS techniques}
\label{sec:comp}
This somewhat technical section establishes the relationship between existing AUS algorithms and the projected data approach. We consider the EKF-AUS \citep[e.g.]{Trevisan11,PaCaTr13}. EKF-AUS is a modified EKF in which the forecast covariance matrix $\Pf_n$ is replaced by the projected matrix $\P_n \Pf_n \P_n$, leading to the Kalman gain
\begin{align}
\label{ausk}
    \Kn =& \P_n \Pf_n \P_n \H^T \left[ \H \P_n \Pf_n \P_n \H^T + \Rc\right]^{-1}\;,
\end{align}
where the EKF forecast covariance matrix $\Pf_n$ and observation operator $\H_n\equiv\H$ are described in Section~\ref{sec:EKF}. It is clear that the EKF-AUS Kalman gain can be written as a combination of the columns of $\U_n$. 

For comparison, we write down the Kalman gain associated with the data model \eqref{pdata1},
\begin{align}
\label{prok}
    \Kn = \Pf_n \P_\H \P_n \left[\P_n\Hdag\left(\H \Pf_n \H^T + \Rc\right)(\Hdag)^T\P_n\right]^{\dag} \;.
\end{align}
We choose this form to most closely resemble EKF-AUS; the arguments of Theorem~\ref{thm:main} guarantee that \eqref{prok} is identical to the Algorithm~\ref{alg:pd} implementation of the EKF.\\
 The difference between the two Kalman gains is essentially that \eqref{prok} interchanges the position of $\H$ and $\P_n$, requiring the use of $\Hdag$ in order to do so, but manages to project all terms in the covariance-weighting inverse instead of only the forecast covariance matrix. Unlike the classical AUS gain \eqref{ausk}, \eqref{prok} does not restrict the analysis increment to the unstable subspace. The innovation is $y_n - \H u^f_n$ in classical AUS, but with \eqref{prok} would be $y^p_n - \P_n \P_\H u^f_n$.\\
That is, classical AUS uses the full data but restricts the assimilation update to the unstable subspace via \eqref{ausk}; Algorithm~\ref{alg:pd} restricts the innovation to the unstable subspace but the assimilation update can distribute this innovation across the whole of model space. The comparison between these algorithms here is pedagogical, not competitive; the advantages of the EKF-AUS algorithm are well established, while Algorithm~\ref{alg:pd} effects a reduction in data dimension that we will explore for Particle Filters, not the EKF.

Finally we obtain a form of EKF associated with the projected model \eqref{PIPsys} and unprojected data. This is essentially a re-derivation of EKF-AUS from the projected framework employed in this paper, confirming that the two are compatible. Consider the linearized physical model $u_{n+1} = \An u_n + \sigma_n$ of Section~\ref{sec:EKF}. 
Then the projected physical model has the form $\P_{n+1}u_{n+1} = \P_{n+1} \An \P_n u_n + \P_{n+1}\sigma_n$ or
\begin{align*}
v_{n+1} =& \P_{n+1} u_{n+1} = [\U_{n+1} \T_{n} \U_n^T] v_n +  \P_{n+1}\sigma_n 
\\
\equiv& \B_n v_n + \P_{n+1}\sigma_n,
\end{align*}
where $v_n = \P_n u_n$, $\B_n = \An \P_n$, and using $\P_{n+1}\An\P_n = \U_{n+1}\T_n\U_n^T.$
The forecast covariance matrix is 
\begin{align*}
    \tPf_{n+1} =&  \P_{n+1} \An  \tPa_{n} \An^T  \P_{n+1} +  \P_{n+1}\Sc  \P_{n+1}
    \\
    =& \P_n \tPf_{n+1} \P_n
\end{align*}
Initialising $\tPa_0 = \Pa_0$, then $\tPf_n$ is precisely the EKF-AUS forecast covariance matrix. 

We will now explore the benefits of the projected data algorithms in an AUS framework, using \eqref{QR} to calculate the projections. 
The first test case is a simple linear model that demonstrates the benefit of reducing the data dimension using PROJ-PF.

\subsection{Case study: linear model with Gaussian noise} \label{sec:lin}
Suppose that forecasts are made for $u\in\R^{100}$ with the model 
\begin{align}\label{projlin}
u_n = e^{\mathbf{A}(t_n-t_{n-1})}\, u_{n-1} + \sigma_n
\end{align}
where 
$\sigma_n\sim\N(0,\,0.05\,\I_{100})$. 
We construct $\mathbf{A}\in\R^{100\times 100}$ so that it has two eigenvalues with small real part $Re(\lambda_i) \in (0,0.04)$, and so that the remaining 98 eigenvalues have real part $Re(\lambda_i) \le -100$. This produces a well known multiscale dynamic, which we describe for the underlying deterministic physical model $du/dt = \mathbf{A}u$. There exists a transformation of this system into a system consisting of  2 `slow' and 98 `fast' variables. The fast variables are rapidly attracted onto a slow invariant manifold that depends only on the slow variables. After an initial transient, the system is effectively 2-dimensional. 
We run DA experiments {assuming that the slow manifold and reduced system are unknown,} instead 
using forecasts and observations from the full, 100-dimensional system \eqref{projlin}. 


We present results for the PF compared to PROJ-PF using Algorithm~\ref{alg:pd}. Four scenarios are considered: where every variable is observed, every second variable, every fourth variable, and finally a scenario in which only the first and 51st variables are observed. \\

Let us pause here to predict the results. The PF weight update depends crucially on the statistical distance of the observations from each particle, the exponent of \eqref{likelihood}. As the dimension of the data increases, the statistical distance of each particle and each observation from the attractor increases due to the accumulation of terms from the measurement error and model noise. The key information - about the distance of each particle from the observation in the 2-dimensional slow subspace that governs the dynamics - is swamped by the accumulation of errors in the less significant 98-dimensional fast subspace. We expect the PF to perform well when the data is 2-dimensional, but grow steadily worse as the data dimension increases. The algorithm of PROJ-PF, by contrast, will estimate the low-dimensional subspace in which the dynamics occurs and confine the data and (through the observation operator) the forecast to this subspace when performing the assimilation. By doing so the dimension of the model and data should affect the accuracy of the algorithm much less.

The remaining experiment parameters are as follows. Particles are initialised at time $t_0$ from a Gaussian with a standard deviation of $0.2$ and initial bias of $0.22$ from the randomly drawn true initial condition. We set $t_n = 0.1n$, and simulate the truth using \eqref{projlin}, collecting observations every $0.1$ time units with small measurement error covariance $\Rc = 0.05^2\I$, until 100 observation times have passed. Both PF algorithms resample if the ESS drops below half the number of particles, which is 1000. On resampling, noise is added to every variable with a standard deviation of $0.02$. We use PROJ-PF with $p=2$. \\

We report the Root Mean-Squared Error (RMSE) between the filter mean at each time step and the true system state. The standard Particle Filter performs very poorly with high dimensional data, while PROJ-PF is reasonably indifferent to the dimension of the data and in all cases has mean RMSE below the RMSE of the observations. Results are displayed in Figure~\ref{fig:lin}. \\
The two extremes of the data dimension serve to highlight its role in Particle Filter divergence, and the role of PROJ-PF. In Figure~\ref{fig:lin0} every variable is accurately observed, and consequently one could obtain a reasonable estimate of the system at every observation time by discarding the model and using the data. Despite this, and despite the low-dimensional attractor in the state dynamics, the Particle Filter estimate diverges frequently and far from the true state. The other extreme in data availability is Figure~\ref{fig:lin3}, in which only two variables are observed and the Particle Filter has an accurate mean RMSE of 0.03. By comparison PROJ-PF is more accurate than the observations in each scenario, and in particular does not diverge at large data dimension.

\begin{figure}
\centering
\subfloat[][All 100 variables are observed. Mean RMSE: 0.21 PF, 0.04 PROJ-PF.]{
\includegraphics[scale=0.46]{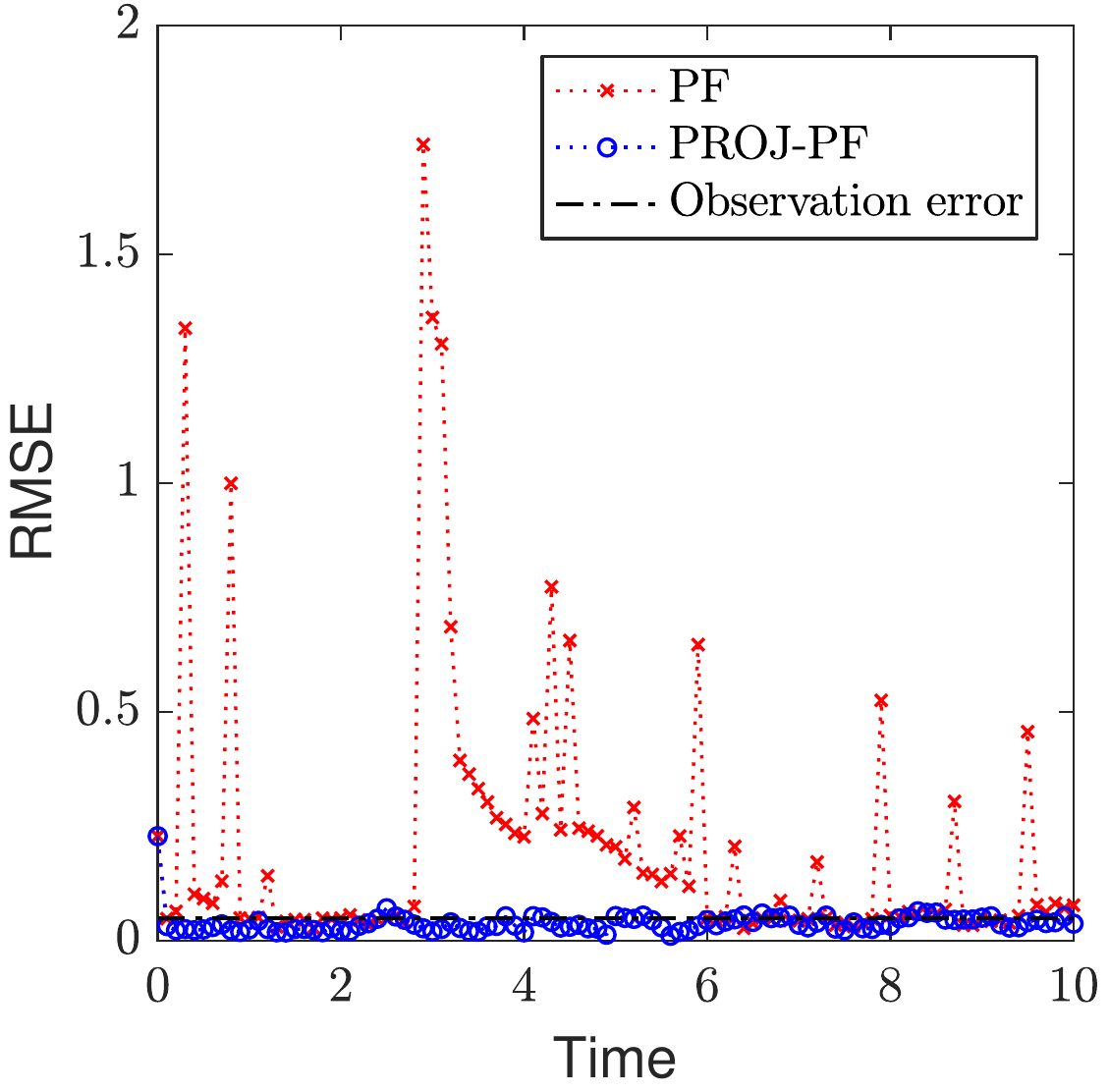}
\label{fig:lin0}}
\quad
\subfloat[][Observations of 50 evenly spaced variables. Mean RMSE: 0.09 PF, 0.04 PROJ-PF.]{
\includegraphics[scale=0.45]{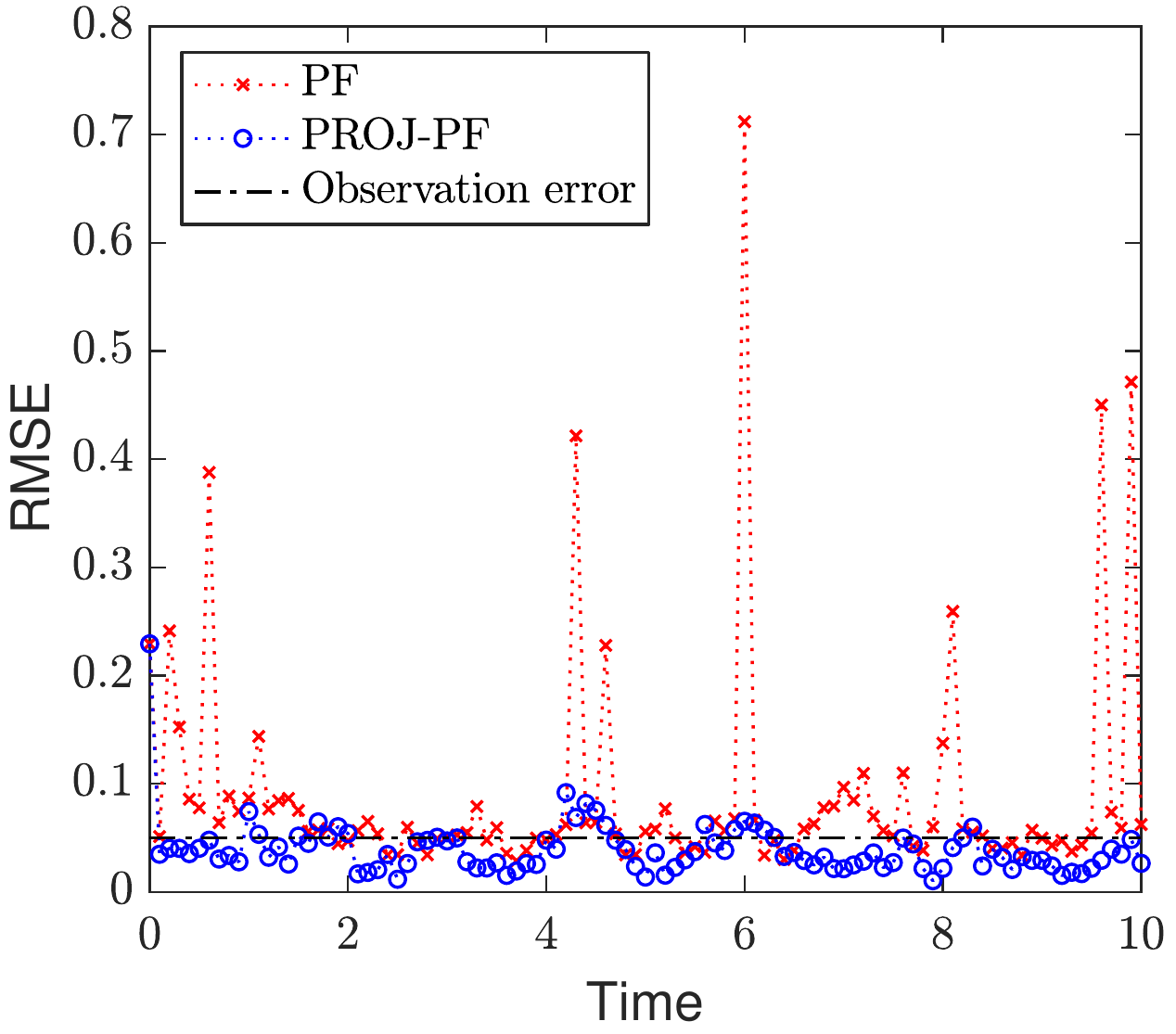}
\label{fig:lin1}}

\subfloat[][Observations of 25 evenly spaced variables. Mean RMSE: 0.15 PF, 0.05 PROJ-PF.]{
\includegraphics[scale=0.45]{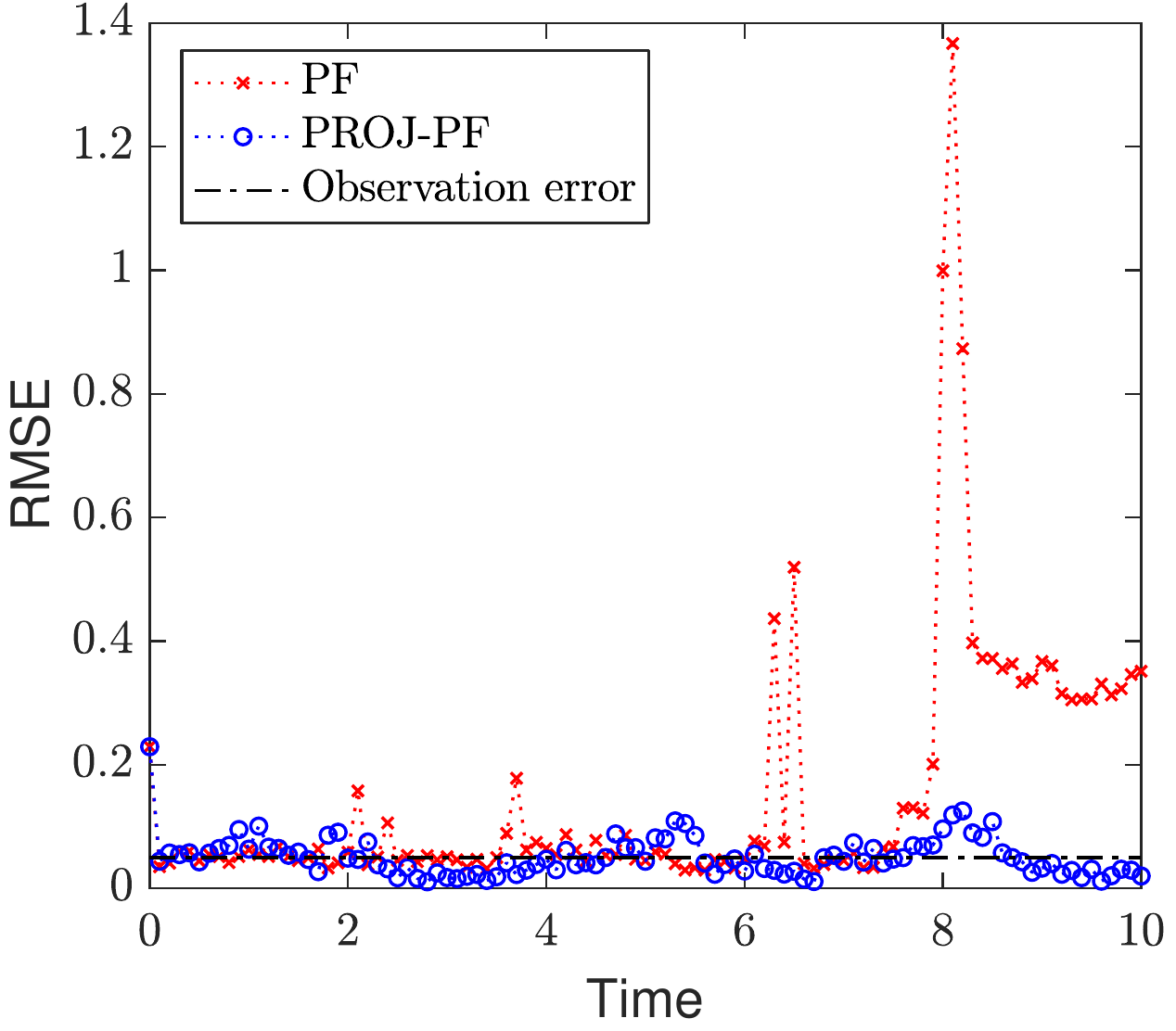}
\label{fig:lin2}}
\quad
\subfloat[][Observations of the first, and 51st, components of $u$. Mean RMSE: 0.03 PF, 0.03 PROJ-PF.]{
\includegraphics[scale=0.45]{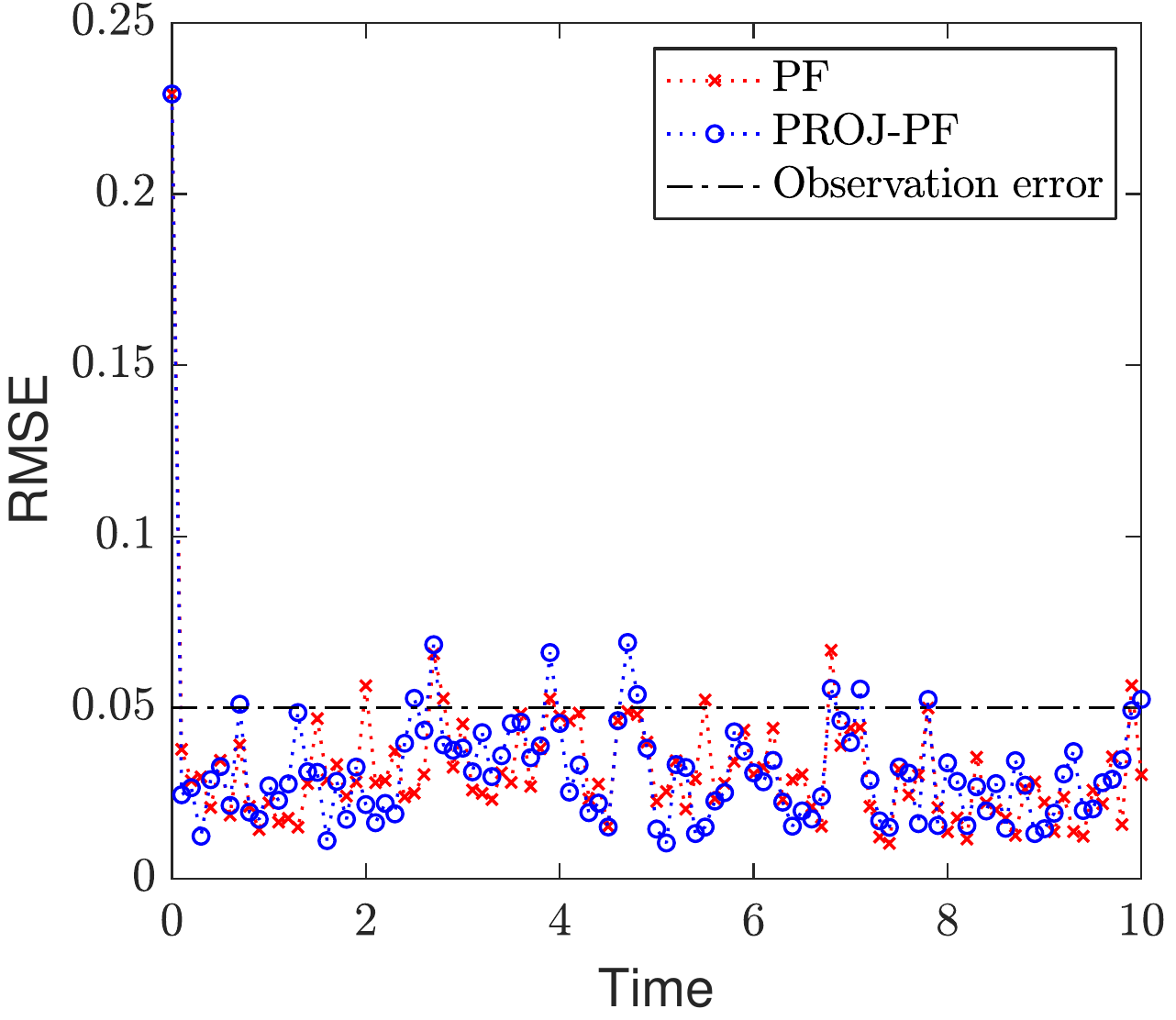}
\label{fig:lin3}}
\caption{Comparison of the Particle Filter to an Algorithm~\ref{alg:pd} implementation of PROJ-PF for a linear system as the number of variables observed is changed. The PF diverges with increasing data dimension, but can accurately capture the posterior with observations of any two random variables.}
\label{fig:lin}
\end{figure}

{On longer time intervals the PF RMSE increases significantly in the cases where the data is 25-, 50-, and 100-dimensional. The PROJ-PF algorithm remains stable and accurate in all scenarios.}

We now present examples from the Lorenz 96 system.

\subsection{Case Study: Chaotic Lorenz 96 system} \label{sec:l96}
Consider the system of ordinary differential equations introduced in \cite{Lo96},
\begin{align} \label{l96}
    \dot u_i =& \left(u_{i+1}-u_{i-2}\right)u_{i-1} -u_i +F\;,
\end{align}
for $i=1,...,J,$  and $F=8$. If $J=40$, then this system is chaotic with $14$ positive and $1$ neutral Lyapunov exponents.
We present experiments in which the deterministic part of the model \eqref{model} is given by an integration of \eqref{l96} for a fixed time. {The true system state is generated by the same procedure; only the initial condition and realizations of the model noise are different.}

The primary focus of this section is Algorithm~\ref{alg:oppfaus}, PROJ-OP-PF, employing PROJ-RESAMP as in Algorithm~\ref{alg:r} when resampling, compared to the OP-PF and EnKF. An Algorithm~\ref{alg:pd} implementation of the ETKF is also considered. 

{In all simulations, observations of every second variable are available, evenly spaced, at each observation time. Observations will generally be accurate (with standard deviation equal to or less than $0.1$), which exacerbates the problem of filter degeneracy that the projected algorithms are intended to mitigate. We will confine experiments to $L=2000$ or $L=50$ particles, the latter of which resembles the affordable ensemble size for geophysical applications. Model simulations are carried out by bridging the observation time step with 5 steps of the fourth order Runge-Kutta scheme.}

We will first consider a regime in which observations are assimilated frequently in time, so each forecast ensemble is strongly contained in the low-dimensional subspace $\P_n$. We then consider longer times between observations, and finally a high-dimensional filtering scenario. In all cases PROJ-OP-PF will significantly outperform the Optimal Proposal PF. The key parameters to be tuned are the projected data dimension $p$, noise added on resampling $\omega$, and confinement to $\P_n$ of the resampling noise, $\alpha$. The latter two parameters were introduced in Algorithm~\ref{alg:r}. OP-PF will be tuned by varying the resampling noise $\omega$. 

\subsubsection{Frequent, accurate observations with a moderate ensemble}
{Set model noise $\Sc = 0.01\I_N$ and dimension $J=40$, number of particles $L=2000$, observation noise $\Rc = 0.01\I_M$, and time between observations to $0.005$ time units. Translating the observation step into dimensional units, this corresponds to assimilating observations every 35 minutes. When an experiment records a time-averaged RMSE, a spinup of 100 assimilation steps is computed and discarded, then error statistics are measured for another 100 steps. Figures~\ref{fig:l96_tune_r}~to~\ref{fig:enkf} are computed in this parameter regime.}

{We first demonstrate how OP-PF and PROJ-OP-PF are tuned. Both algorithms are run with $20$ different values of $\omega$ between $10^{-5}$ and $10^{-2}$, and $10$ values of $\alpha$ between $0$ and $1$. The second parameter $\alpha$ is used only in PROJ-OP-PF, in the PROJ-RESAMP Algorithm~\ref{alg:r}. The mean RMSE and percentage of resampling steps (after the spinup) are recorded, and each algorithm is repeated $20$ times in each configuration. The rank of the projection was chosen to be $p=3$ for PROJ-OP-PF. Figure~\ref{fig:l96_tune_r} shows a sample of the result for PROJ-OP-PF. The optimal choice of $(\omega,\,\alpha)$ is taken to be the choice that minimises the RMSE. Note that the RMSE and filter degeneracy both strictly decrease as $\alpha$ increases, at all considered values of $\omega$. All following figures are produced using an optimal choice of $(\omega,\,\alpha)$ for PROJ-OP-PF, and of $\omega$ for OP-PF. }

\begin{figure}
    \centering
    \includegraphics[width=0.16\textwidth]{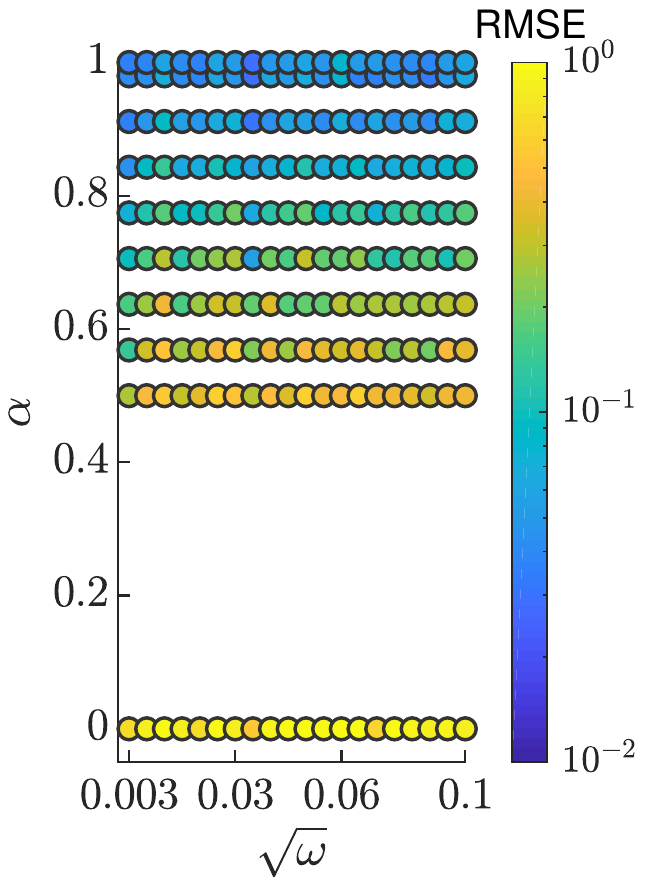} \includegraphics[width=0.16\textwidth]{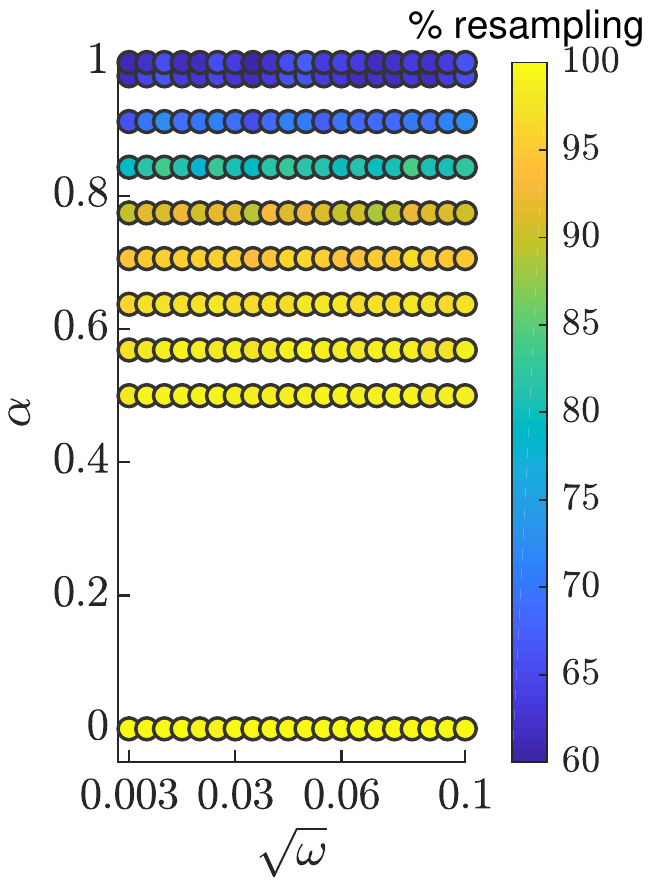}
    \includegraphics[width=0.16\textwidth]{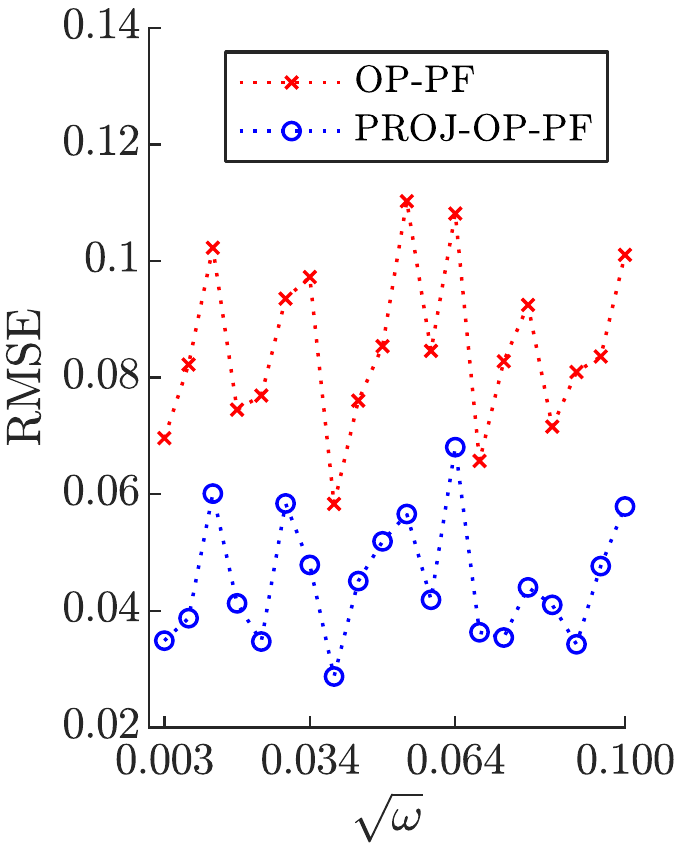}
\caption{\emph{Left, Middle}: Statistics for PROJ-OP-PF with $p=3$ as the resampling noise $\omega$ and confinement to the unstable subspace $\alpha$ are varied for the Lorenz 96 system with time $0.005$ between observations. Data points are shaded to reflect their value. Each data point represents the mean from $20$ repetitions, each of which was also time-averaged. The RMSE, ranging from $0.03$ to $1.3$, decreases with increasing $\alpha$. \emph{Right}: RMSE for OP-PF as $\omega$ varies, compared to PROJ-OP-PF results with the optimal choice of $\alpha$. The mean error for PROJ-OP-PF is $53\%$ of the mean error for OP-PF. }
\label{fig:l96_tune_r}
\end{figure}


We now investigate the optimal choice of dimension $p$ for the projected data in PROJ-OP-PF. One might expect that $p\ge15$ would be optimal, as the system has $15$ unstable and neutral modes. However, the blending of projected and unprojected data in Algorithm~\ref{alg:oppfaus} will sufficiently constrain the weakly unstable modes in the system, and at the same time the PF algorithm will avoid degeneracy at low values of $p$. The RMSE and number of resampling steps taken by PROJ-OP-PF compared to OP-PF are displayed in Figure~\ref{fig:l96_tune_p}. The RMSE has a clear minimum at $p=6$, about $30\%$ less than the OP-PF RMSE, and the frequency of resampling in PROJ-OP-PF decreases sharply with $p$. The optimal choice of noise to add on resampling was $\omega = 0.0027$ for OP-PF, and $\omega = 0.056$ for PROJ-OP-PF. The optimal noise for PROJ-OP-PF is an order of magnitude larger than for OP-PF. This suggests that one benefit of the novel resampling scheme is the ability to more vigorously explore the uncertain directions in the forecast without moving system estimates too far off any local attractor.

\begin{figure}
\centering
\subfloat{
\includegraphics[width=0.47\textwidth]{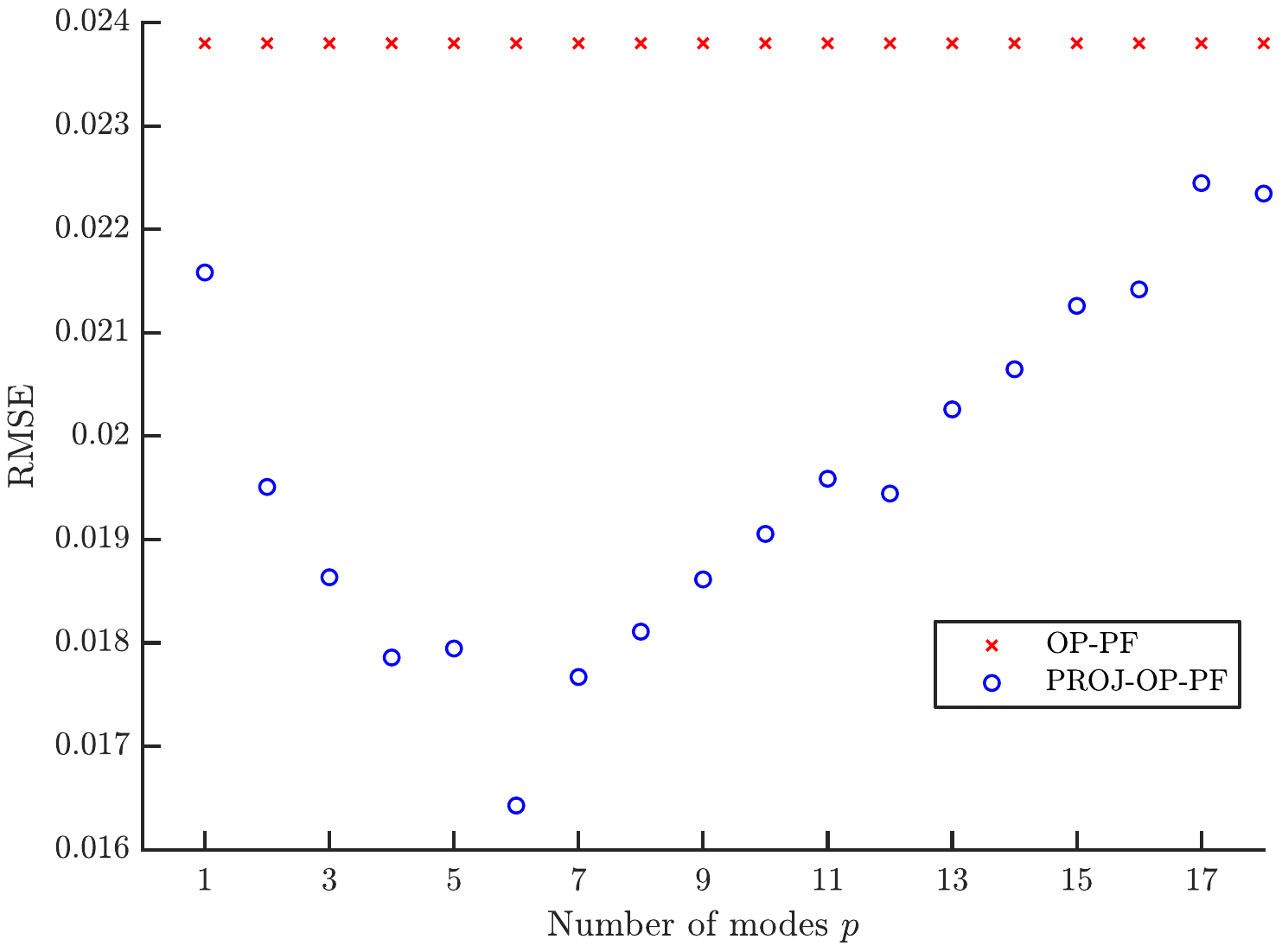}
\label{fig:prmse}}
\quad
\subfloat{
\includegraphics[width=0.47\textwidth]{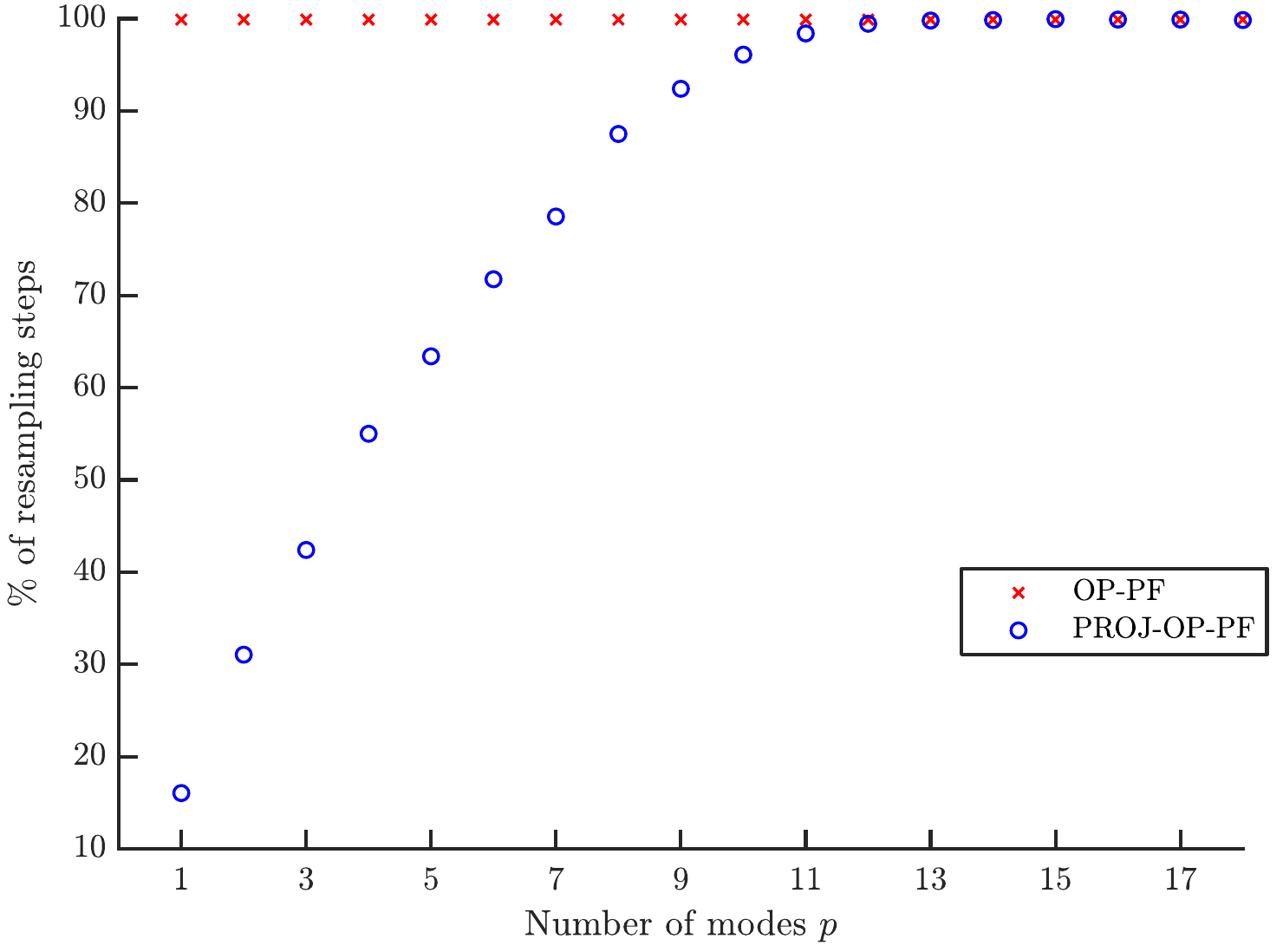}
\label{fig:pdegen}}
\caption{Error statistics for PROJ-OP-PF as the rank of the projection is varied, compared to the Optimal Proposal PF, for the Lorenz96 system with $0.005$ time units between observations. Each data point represents the mean from $20$ repetitions, each of which was also time-averaged.}
\label{fig:l96_tune_p}
\end{figure}

The selection of $p$ in a non-degenerate DA scheme is less crucial. For comparison to Figure~\ref{fig:l96_tune_p} we implement an ETKF and compare to PROJ-ETKF (implemented via the projected data approach of Algorithm~\ref{alg:pd}). The same experimental parameters are used as for the PF results, except for the ensemble size which is 50. Results are shown in Figure~\ref{fig:enkf}. For the PROJ-ETKF the error statistics are similar for a large range $5\le p \le 13$, about $20\%$ below the mean ETKF behaviour\footnote{It is a little surprising that the PROJ-ETKF does any better than ETKF at all, as the ETKF does not suffer from the curse of dimensionality. 
It may be that the dimension reduction involved in PROJ-ETKF ameliorates ill-conditioning in the the Kalman gain in such a manner that the ETKF is thereby improved; but if so the mechanism of improvement is still not clear, since the ETKF was designed for exactly that scenario already \citep{Bishop2001}. PROJ-ETKF has no benefit to RMSE in the more realistic scenario where observations are assimilated less frequently in time.}.  

\begin{figure}
    \centering
    \includegraphics[width=0.47\textwidth]{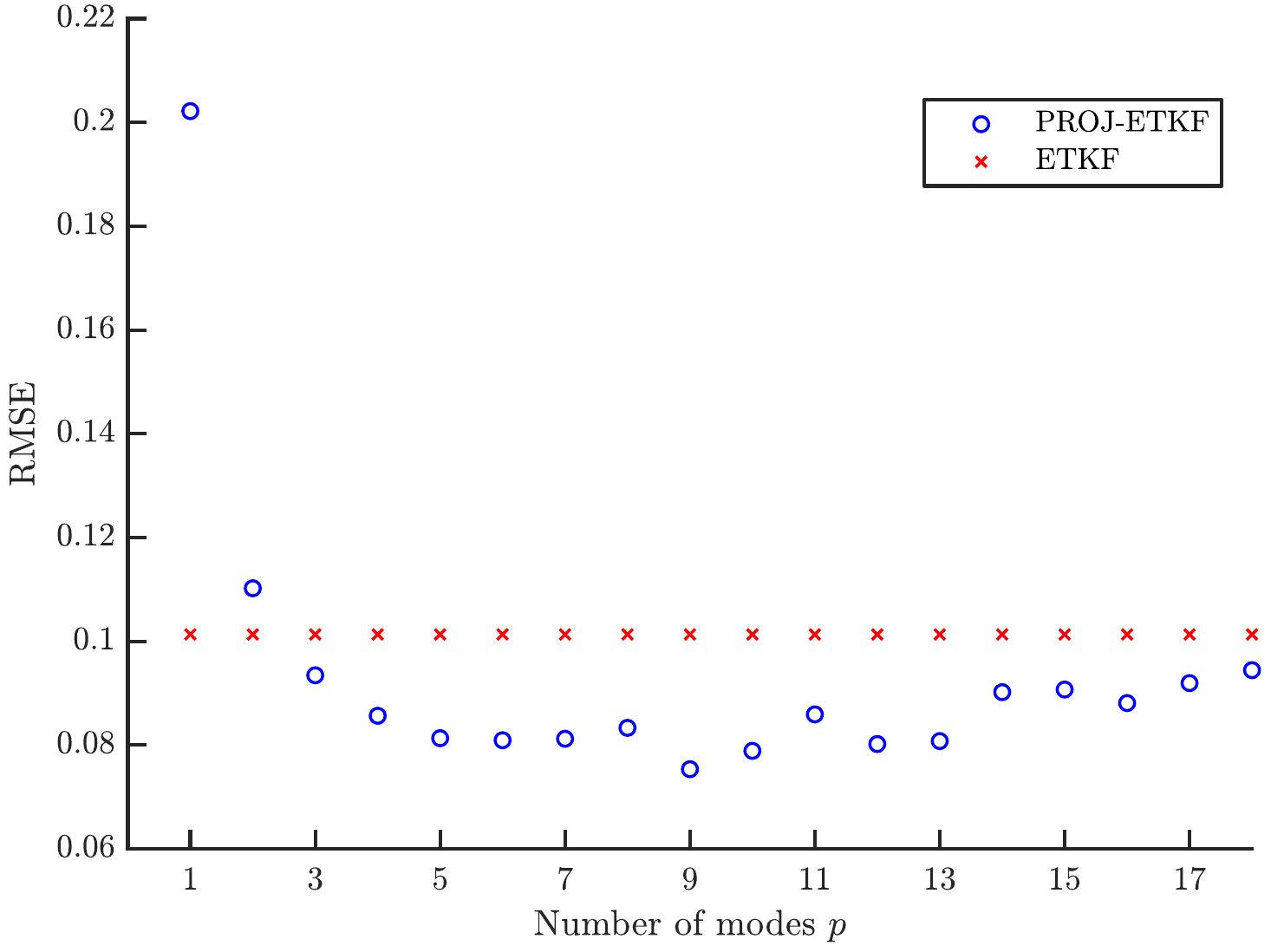}
    \caption{Statistics for PROJ-ETKF as the rank of the projection is varied, compared to the Ensemble Transform Kalman Filter, for the Lorenz96 system. Each data point represents the mean from $20$ repetitions, each of which was also time-averaged. Somewhat surprisingly, projecting the data reduces the error in the EnKF for this experiment, though to a lesser extent than for the PF methods.}
    \label{fig:enkf}
\end{figure}

\subsubsection{Infrequent, accurate observations with a small ensemble}

{We now move to a more realistic scenario in which observations are infrequent and the affordable ensemble size is small. We preserve model noise $\Sc = 0.01\I_N$ and dimension $J=40$, and observation noise $\Rc = 0.01\I_M$, but set the number of particles $L=50$ and the time between observations to $0.05$ time units. Translating the observation step into dimensional units, this corresponds to assimilating observations every 6 hours. When an experiment records a time-averaged RMSE, a spinup of 200 assimilation steps is computed and discarded, then error statistics are measured for another 100 steps. Figures~\ref{fig:l96_bOP}~and~\ref{fig:real} are computed in this parameter regime.}

As in the previous section, the main result is to show how scaling the projected data dimension $p$ affects the RMSE, and in particular when, or if, PROJ-OP-PF outperforms the OP-PF. This scaling is shown in Figure~\ref{fig:l96_bOP}, in which the best PROJ-OP-PF results achieves mean RMSE 2/3 of that of the OP-PF. The percentage of steps that trigger resampling is also shown. As before, it monotonically increases with $p$.

\begin{figure}
\centering
\subfloat{
\includegraphics[width=0.47\textwidth]{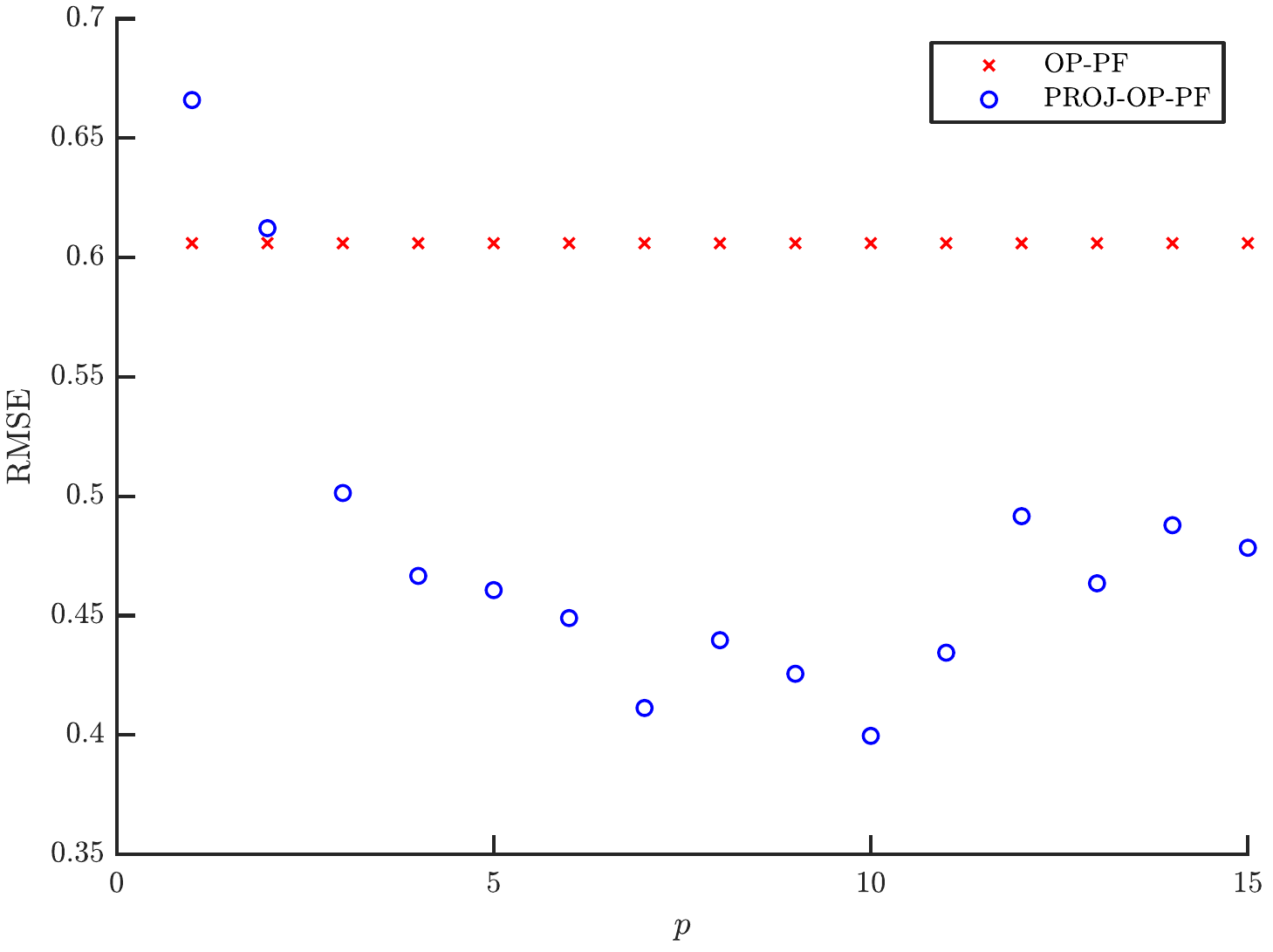}
\label{fig:brmse}}
\\
\subfloat{
\includegraphics[width=0.47\textwidth]{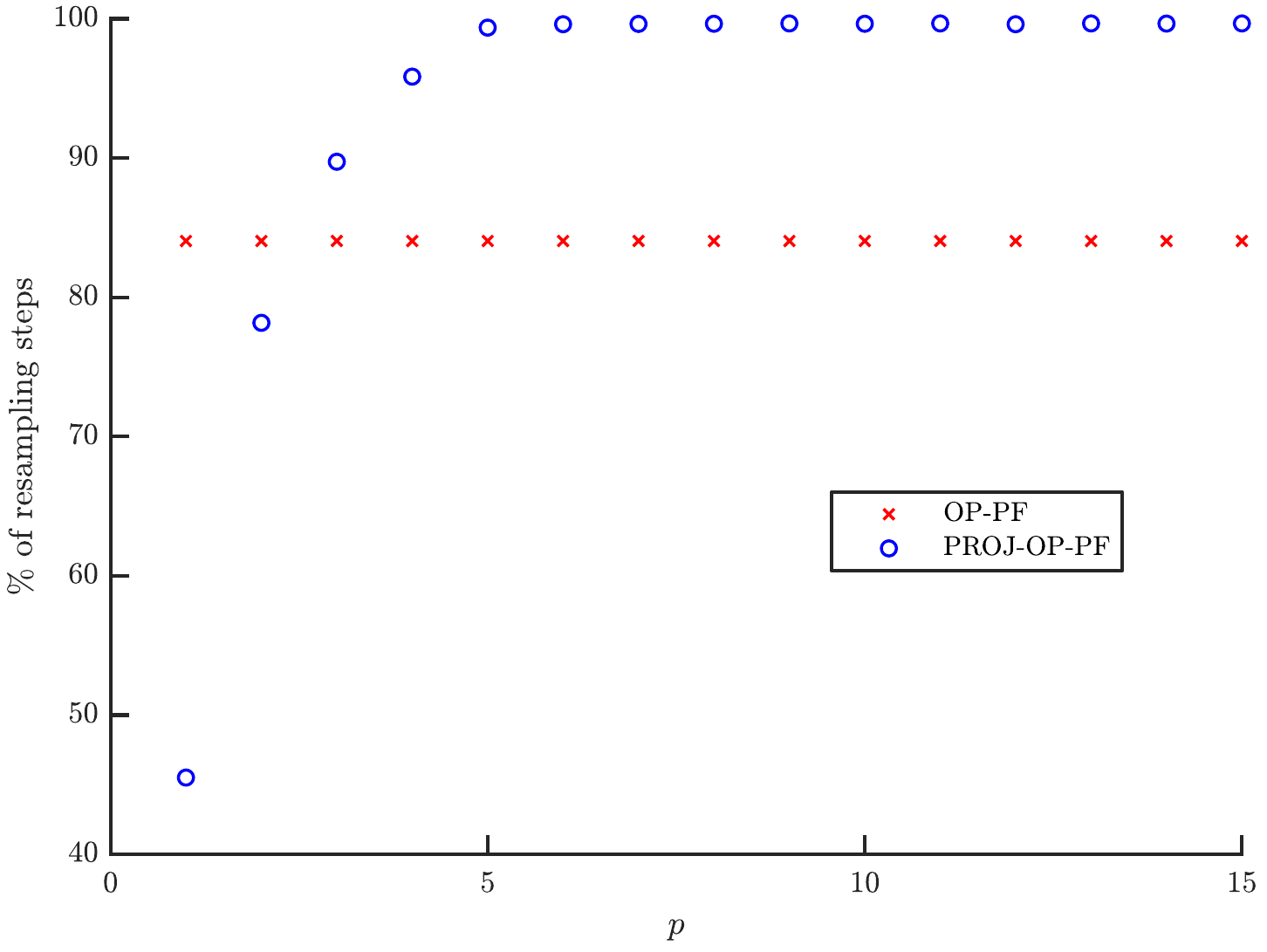}
\label{fig:bdegen}}
\caption{Statistics for PROJ-OP-PF as the rank of the projection is varied, compared to the Optimal Proposal PF, for the Lorenz 96 system with the standard $0.05$ time units between observations. Each data point represents the mean from 30 repetitions, each of which was also time-averaged. The optimal, $p=10$ PROJ-OP-PF RMSE is $2/3$ of the OP-PF RMSE. }
\label{fig:l96_bOP}
\end{figure}

The RMSE over time from one of the data points in Figure~\ref{fig:l96_bOP} is shown in Figure~\ref{fig:real}, and the DA methods are both shown over a long-time run. These clarify that the better performance of PROJ-OP-PF is not because it outperforms OP-PF at every, or even most data points. Rather, PROJ-OP-PF suffers from fewer spikes in the RMSE, and those spikes tend to be smaller.

\begin{figure*}
    \centering
    \includegraphics[width=0.4\textwidth]{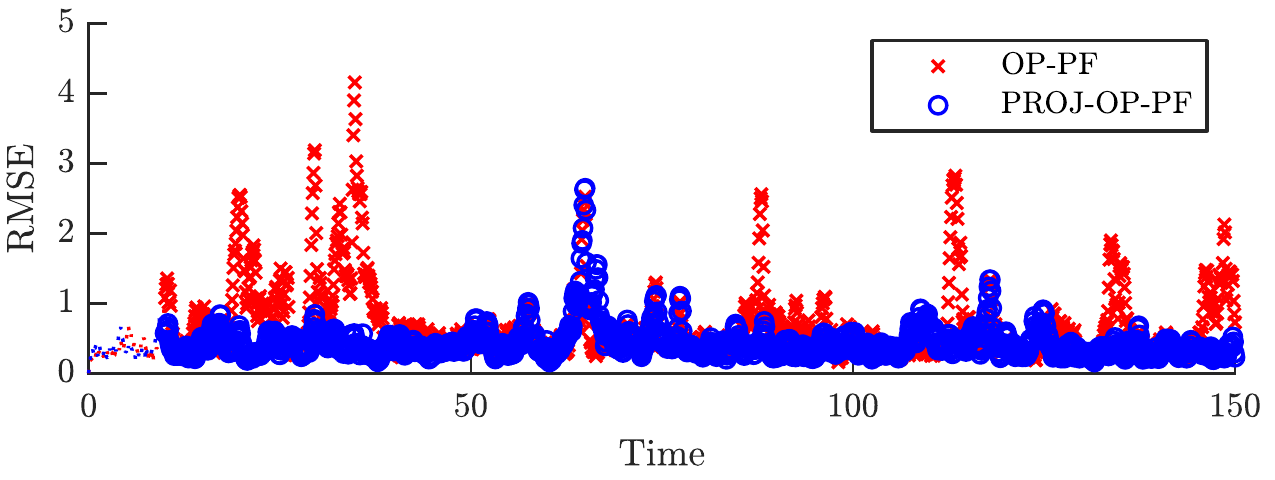}
    \caption{\jm{Error statistics for a long time run using the optimal $p=10$ parameters from Figure~\ref{fig:l96_bOP}. The spin-up steps are dotted. The short term errors (for the 200 steps after spinup) are 0.41 for PROJ-OP-PF and 0.64 for OP-PF. The long time errors are 0.43 for PROJ-OP-PF and 0.68 for OP-PF, suggesting the spinup is sufficiently long for error statistics to settle.} }
    \label{fig:real}
\end{figure*}

The results in Figure~\ref{fig:l96_bOP} were produced using $\omega = 0.0037$ for OP-PF, and up to $\omega=0.2$ for PROJ-OP-PF. The optimal value of $\omega$ was selected by computing the time-averaged RMSE for 30 repetitions of OP-PF at 25 different values of $\omega$ in $[10^{-4},\, 0.4]$. The tuning for PROJ-OP-PF additionally considers five values of $\alpha$ in $[0,\,1]$. The RMSE at each $(\omega,\,\alpha)$ are shown for the case when the projected data dimension is $p=10$ in Figure~\ref{fig:l96_tune_r_05}. The optimal values of $\omega$ for each $p$ are given in Table~\ref{tab:opo}. All these choices were optimal in the sense that they minimised the mean RMSE; one could instead, or additionally, have considered the prevalence of resampling in PROJ-OP-PF and tuned $(\alpha,\,\omega)$ to minimise that. \\

The displayed RMSE in Figure~\ref{fig:l96_tune_r_05} is large whenever $\alpha=0$, and the same was true in Figure~\ref{fig:l96_tune_r}. One might infer that of the two ways the projection is used, in the weight update \eqref{projopW} and in PROJ-RESAMP (Algorithm~\ref{alg:r}), the latter is more significant. But in fact, if we run an OP-PF using the PROJ-RESAMP algorithm for resampling, we observe error statistics no better than the standard OP-PF. That is, both PROJ-OP-PF and PROJ-RESAMP are needed in concert to reliably improve on the Optimal Proposal PF.

\begin{figure}
    \centering
    \includegraphics[width=0.16\textwidth]{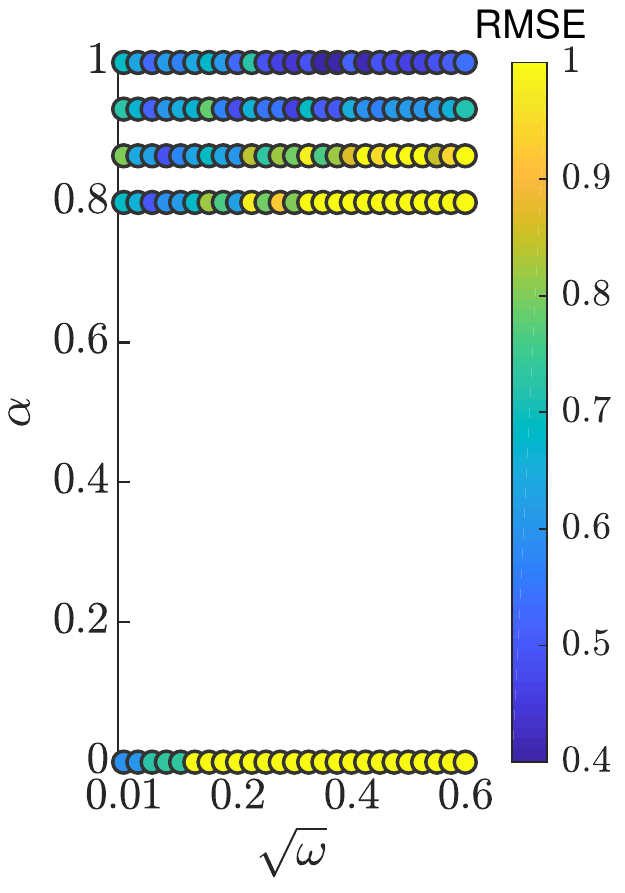} \includegraphics[width=0.16\textwidth]{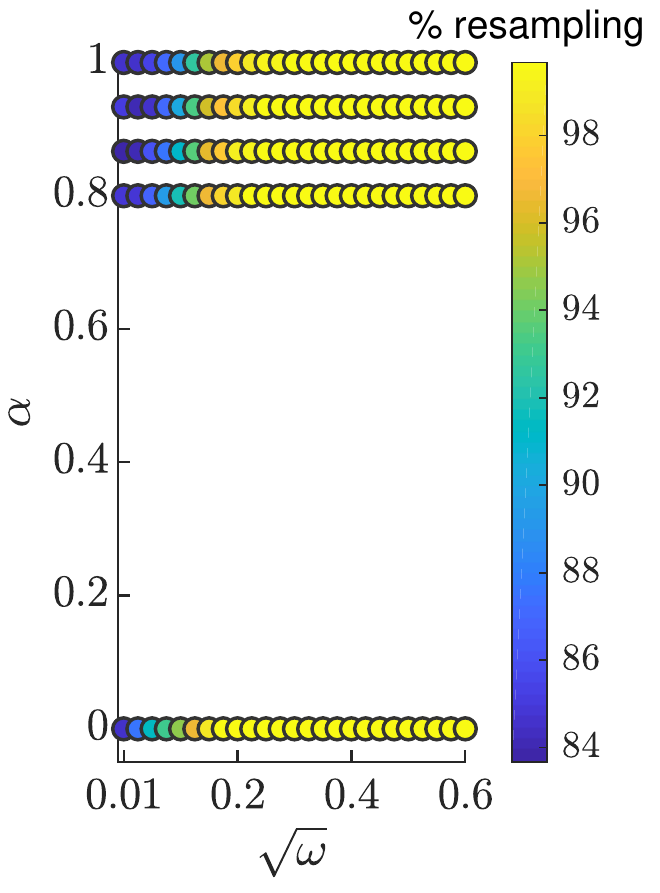}
    \includegraphics[width=0.16\textwidth]{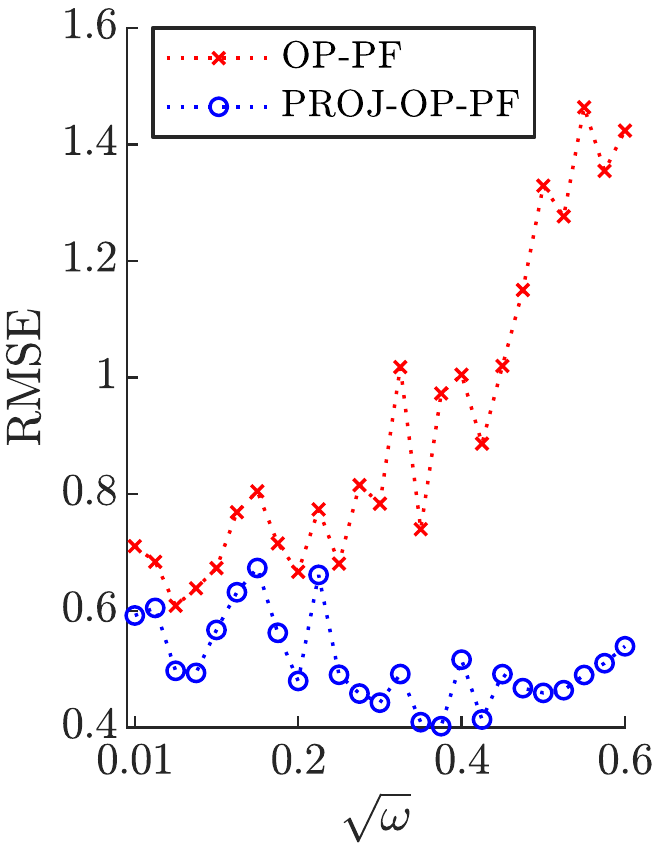}
\caption{\emph{Left, Middle}: Tuning results for PROJ-OP-PF with $p=10$ in Figure~\ref{fig:l96_bOP}. Data points are shaded to reflect the RMSE and percentage of resampling steps respectively, as the resampling noise $\omega$ and confinement to the unstable subspace $\alpha$ are varied for the Lorenz 96 system with time $0.05$ between observations. The RMSE ranges from $0.4$ to $2.1$, but is cut off at 1. \emph{Right}: Tuning results for OP-PF in Figure~\ref{fig:l96_bOP}. The optimal choice of RMSE, for $\omega = 0.0037$, is $0.61$. We also plot PROJ-OP-PF results with the optimal choice of $\alpha$ from the left two figures. Each data point in all figures represents the mean from 30 repetitions, each of which was also time-averaged. }
\label{fig:l96_tune_r_05}
\end{figure}

\begin{table}
    \centering
    \resizebox{0.49\textwidth}{!}{%
    \begin{tabular}{c|c c c c c c c c c c c c c c c}
         $p$ & 1 & 2  & 3 & 4 & 5 & 6 & 7 & 8 & 9 & 10 & 11 & 12 & 13 & 14 & 15 \\
         $\omega$ & 0.01 & 0.13  & 0.13 & 0.13 & 0.20 & 0.20 & 0.20 & 0.20 & 0.15 & 0.13 & 0.20 & 0.07 & 0.13 & 0.07 & 0.13
    \end{tabular}}
    \caption{Optimal choice of resampling noise $\omega$ for PROJ-OP-PF at each $p$ in Figure~\ref{fig:l96_bOP}. The optimal choice of $\alpha$ was $\alpha=1$ in all cases. }
    \label{tab:opo}
\end{table}

\subsubsection{Infrequent, accurate observations with small ensemble and high-dimensional model}
Finally, we investigate the behaviour of PROJ-OP-PF with a model dimension $J=400$. We use accurate model covariance $\Sc=0.04^2\I$ and observation covariance $\Rc=0.01^2\I$, and set the time between observations to $0.05$. \jm{The number of particles is $L=50$.} We choose to project onto the $p=8$ most unstable modes for PROJ-OP-PF, and use as a benchmark results from an ETKF. Results for this scenario are displayed in Figure~\ref{fig:highd}. We see the OP-PF diverge, while PROJ-OP-PF performs almost as well as the Ensemble Kalman Filter.

\begin{figure}
    \centering
    \includegraphics[width=0.47\textwidth]{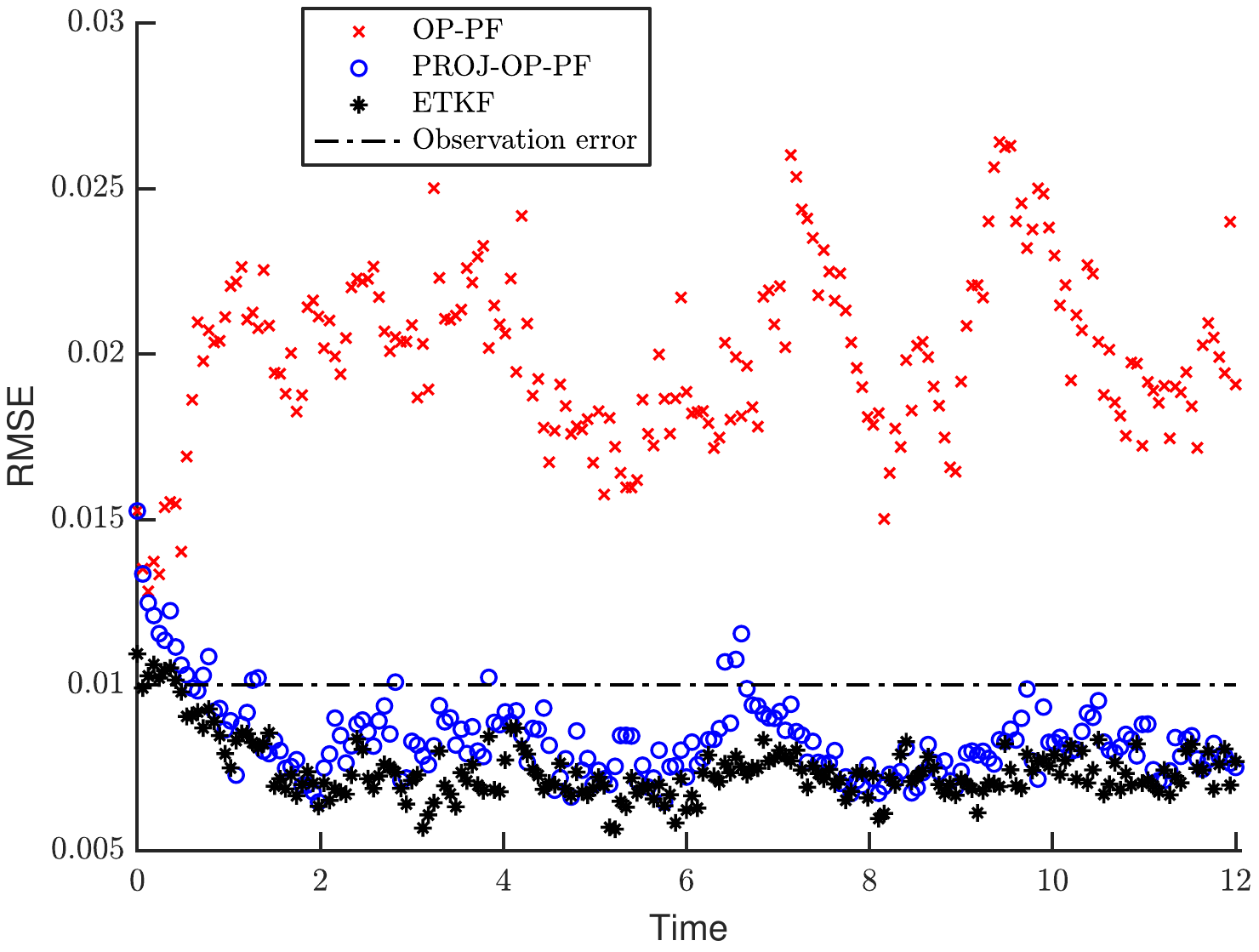}
    \caption{Error statistics for the DA methods over time, from the $400$-dimensional Lorenz96 system with accurate observations of every second variable. In this case the ETKF is used to provide a `good performance' benchmark for PROJ-OP-PF.}
    \label{fig:highd}
\end{figure}

\section{Discussion} \label{sec:disc}
In this work a new approach to DA has been derived that allows for dimension reduction of the data using a projection defined in state space. The chief application has been Particle Filters Assimilating in the Unstable Subspace, which the classical AUS approach is unsuitable for because ensemble methods already project the forecast strongly into the unstable subspace \citep{BocquetCarrassi17}. By contrast the new approach sharply reduces filter degeneracy in a predictable fashion, improves filter accuracy and allows one to construct a sensible resampling scheme that adds more noise in more uncertain directions. Algorithms resting on the projected DA approach were tested on a sample linear system to investigate the role of data dimension in a simple context, and on the chaotic Lorenz 96 system that provides a challenging scenario for particle filters. The projected DA approach was also found to have some benefits for the Ensemble Kalman Filter.
Two algorithms were tested; the first allows the projected DA formulation to be simply applied to any DA scheme, while the second is a particle filter that mixes projected and unprojected data based on the optimal proposal. The discrete QR technique used to find the unstable subspace in this work is rigorously justified and the additional cost incurred by it is proportional to employing an ensemble size of the dimension of the projected subspace. 

{Some limitations of the current projected algorithms suggest improvements that will drive further work in this area. The projected schemes make no use of the orthogonal data set, but in principle the orthogonal data could instead be assimilated in a separate algorithm that is less sensitive to dimension. Such manipulations are done in \cite{MajdaQiSapsis14, Slivinski15}, for example, and formulated for model error in AUS in section 3.2 of \cite{Grud2018a}. Future work will generalise the projected DA approach to the assimilation of multiple projections by multiple assimilation methods. }

\subsection*{Acknowledgements}
JM acknowledges the support of ONR grant N00014-18-1-2204, NSF grant DMS-1722578, and the Australian Research Council Discovery Project DP180100050. EVV acknowledges the support of NSF grants DMS-1714195 and DMS-1722578.
The authors are grateful to Alberto Carrassi for helpful feedback on an early version of this work.
\appendix
\section{Projections onto convex sets}
\label{dpa}
Given two orthogonal projections $\P_A,\,\P_B$, the following algorithms identify the projection $\P_{A\cap B}$.\\
Von Neumann's algorithm iterates the product of the projections,
\[
\P_{A\cap B} = \lim_{k\to\infty} (\P_A \P_B)^k
\]

Dykstra's projection algorithm generally converges faster.

Start with $x_{0}=\I,\, p_{0}=q_{0}=k=0$, and update by
\begin{align*}
 y_{k}=&\P_{A}(x_{k}+p_{k}) \\
 p_{{k+1}}=&x_{k}+p_{k}-y_{k}\\
 x_{{k+1}}=&\P_{B}(y_{k}+q_{k})\\
 q_{{k+1}}=&y_{k}+q_{k}-x_{{k+1}}.
 \end{align*}
Then $\P_{A\cap B} = \lim_{k\to\infty}x_k$. \\
Either algorithm may be used with some tolerance on the change in the approximation of $\P_{A\cap B}$, or to some finite $k$.








\bibliography{bibliography}{}
\bibliographystyle{wileyqj}

\end{document}